\documentclass[11pt,twoside]{article}
\usepackage{fancyhdr}
\usepackage{amsmath, amsthm, amssymb}
\usepackage{graphicx}
\usepackage{subcaption}
\usepackage{float}
\usepackage{booktabs}
\usepackage{multirow}
\usepackage{longtable}
\usepackage{dblfloatfix}
\usepackage{rotating}
\usepackage{authblk}

\usepackage{tikz}
\usetikzlibrary{positioning, arrows.meta, calc, shapes.geometric}
\usepackage[font=sf, labelfont={sf,bf}]{caption}
\usetikzlibrary{backgrounds}

\theoremstyle{plain}
\newtheorem{lemma}{Lemma}

\newtheorem{theorem}{Theorem}
\newtheorem{corollary}{Corollary}

\theoremstyle{definition}

\theoremstyle{remark}
\newtheorem{remark}{Remark}

\usepackage[utf8]{inputenc}

\usepackage{xcolor}
\usepackage{todonotes}

\usepackage[margin=1in]{geometry}

\usepackage[colorlinks=true, linkcolor=blue, citecolor=blue]{hyperref}
\usepackage{cleveref}

\usepackage[numbers, sort&compress]{natbib}

\DeclareMathOperator*{\argmin}{arg\,min}

\newcommand{\vect}[1]{\boldsymbol{#1}}
\newcommand{\R}{\mathbb{R}}

\newcommand{\headers}[2]{%
  \fancyhead[CO]{#1}
  \fancyhead[CE]{#2}
}

\headers{Discovery and control of multistable systems via structured Neural ODEs}%
        {I. Griss Salas, E. King}

\title{Data-driven discovery and control of multistable nonlinear systems and hysteresis via structured Neural ODEs}

\author[1]{Ike Griss Salas\thanks{Corresponding author: grisal@uw.edu}}

\author[2]{Ethan King}

\affil[1]{Department of Applied Mathematics, University of Washington, Seattle, WA, USA}
\affil[2]{Pacific Northwest National Laboratory, Richland, WA, USA}

\date{}

\begin{document}
\maketitle

\begin{abstract}
Many engineered physical processes exhibit nonlinear but asymptotically stable dynamics that converge to a finite set of equilibria determined by control inputs. Identifying such systems from data is challenging: stable dynamics provide limited excitation and model discovery is often non-unique. We propose a minimally structured Neural Ordinary Differential Equation (NODE) architecture that enforces trajectory stability and provides a tractable parameterization for multistable systems, by learning a vector field in the form \(F(x,u) = f(x)\,(x - g(x,u))\), where \(f(x) < 0\) elementwise ensures contraction and \(g(x,u)\) determines the multi-attractor locations. Across several nonlinear benchmarks, the proposed structure is efficient on short time horizon training, captures multiple basins of attraction, and enables efficient gradient-based feedback control through the implicit equilibrium map \(g\). 

\end{abstract}

\section{Introduction}
System identification and control of dynamical systems from observables remain ubiquitous challenges across physics and engineering. Data-driven modeling as an approach has gained popularity but fundamental questions regarding the limits of learnability persist. Recent discoveries, notably Shumaylov et al. \cite{shumaylov2025discoverabledatadiscoveryrequires}, have established that \emph{chaos} is a necessary condition for the \emph{unique} discovery of dynamical systems from a single observed trajectory. This poses a significant constraint on the class of engineering and biological systems that operate in non-chaotic regimes, implying the ``true'' underlying model may be formally unidentifiable from trajectory data alone.

The lack of uniqueness motivates a critical pivot in perspective: if the unique recovery of governing equations is ill-posed for non-chaotic systems, can we instead leverage this non-uniqueness to learn representations that are \emph{useful}?

The concept of learning favorable representations is not new; indeed, it is fundamental to classical mechanics and astronomy. The evolution from Cartesian coordinates to angular dynamics for the pendulum, or the development of the heliocentric model, demonstrates that the choice of state representation dictates the tractability and interpretability of the problem \cite{Goldstein2002}. In both cases, the ``useful'' representation preceded the full derivation of the governing laws, suggesting that finding the right geometry is often a prerequisite for understanding the dynamics.

However, in the context of data-driven approaches, there remains a significant gap in discovering representations explicitly constructed for downstream tasks such as interpretability, stability, and control. 

In this work, we investigate the discovery of useful representations within the specific domain of \emph{asymptotically stable dynamical systems}. These systems, characterized by bounded trajectories that converge to unique steady states determined by control input, are pervasive in engineering systems, e.g. chemical, biological, and manufacturing processes. We introduce a data-driven methodology that infers these dynamics solely from observable trajectories, demonstrating success with sparse temporal data (time horizons significantly shorter than full system evolution). Specifically, our framework enables:
\begin{enumerate}
    \item \textbf{Trajectory Stability:} The enforcement of bounded trajectory learning consistent with the observation of asymptotic stability. 
    \item \textbf{Interpretability:} The identification of salient dynamical features, such as bifurcation points, e.g. ``tipping points'', and decoupling of multivariate dynamics into univariate representations when appropriate. 
    \item \textbf{Multistability:} The ability to capture multiple steady states, the number of which need not be known \textit{a priori.}
    \item \textbf{Control:} The ability to construct feedback control policies capable of traversing non-trivial bifurcations, such as hysteresis loops, while easily adhering to user provided control constraints. 
\end{enumerate}

We demonstrate the effectiveness of this method on multiple nonlinear systems.

\section{Background and related work}
Model discovery, or more specifically the discovery of continuous-time dynamical systems from time-series observations, has been a long-standing problem \cite{aastrom1971system, ljung2010perspectives, keesman2011system}, with seminal works using symbolic regression \cite{bongard2007automated,schmidt2009distilling} to obtain analytical solutions. The works in \cite{wang2011,Brunton2016,rudy2017data,schaeffer2017learning} further popularized the use of sparsity priors, extending ideas from compressive sensing literature \cite{donoho2006compressed,candes2006robust,candes2008intro}. 

The use of deep learning architectures has also gained popularity for system identification \cite{PILLONETTO2025111907, LJUNG20201175}, with implicit neural models gaining particular traction. The work in \cite{chen2018neural} introduced the Neural Ordinary Differential Equation (NODE) framework, a learning paradigm for governing dynamical systems by integrating neural network–parameterized dynamics and comparing simulated trajectories against time-series observations. The associated loss is minimized by tuning the models by backpropagating through a selected numerical integrator. NODEs have since been used for latent variable modeling and continuous normalizing flows and have shown considerable success in a variety of engineering tasks, e.g., \cite{Koch2025Neural,koch2025contactdatadrivenfrictionstirprocess, koch2024atmospheric,koch2024gnde}.

Despite their versatility, NODEs can be difficult to train. The associated loss landscape arising from training is often extremely non-convex. For example, \cite{okamoto2025learningsimplestneuralode} illustrated that even training on a simple one-dimensional linear ODE can suffer from severe instabilities when optimized using gradient descent. Continuous-time dynamics make the loss landscape highly ill-conditioned, necessitating careful architecture design, regularization, or additional constraints on the learning objective.

To mitigate these challenges, a popular extension of NODEs is to incorporate Lyapunov stability. The work in \cite{Kolter2019LearningSD} jointly learns both the dynamics and a Lyapunov candidate function, guaranteeing provable stability over the entire state space. Extensions such as \cite{ wu2023neural, sochopoulos2024learningdeepdynamicalsystems,kang2021stableneuralodelyapunovstable} impose Lyapunov conditions directly in the training objective.

Additional efforts focus on designing intrinsically stable NODE architectures. The work \cite{sochopoulos2024learningdeepdynamicalsystems} proposes a class of stable latent dynamical systems compatible with NODE-style training, imposing a Lyapunov-like structure to obtain provably stable dynamics with multiple attractors. The work in \cite{kang2021stableneuralodelyapunovstable} addresses stability from the perspective of adversarial robustness, building on the observation that NODEs are more resilient to perturbations than standard deep networks \cite{yan2020robustness} and accordingly impose a Lyapunov-stable learning criterion. Other works incorporate dissipativity constraints by enforcing a dissipative structure in learned models, projecting arbitrary neural-network dynamics onto a dissipative class using a generalized Kalman–Yakubovic–Popov (KYP) lemma \cite{okamoto2024learningdeepdissipativedynamics}.

An emerging alternative to continuous ODE flow models are deep equilibrium models (DEQs), which observe that many deep networks converge to fixed points \cite{bai2019deepequilibriummodels}. DEQs are solved directly for equilibrium points using implicit differentiation. The connection to classical system identification arises through the shared emphasis on fixed-point structure and implicit dynamics. Although DEQs do not explicitly model continuous-time dynamics, their fixed-point formulation provides useful parallels to implicit ODE solvers and stable attractor learning.

In contrast to these methods, we propose an alternative approach. We develop a stability-constrained NODE framework for learning asymptotically stable systems that exhibit multi-attractor and hysteretic dynamics. The learned model guarantees trajectory stability throughout the state space and additionally enables tractable control policies that can navigate nontrivial bifurcations, such as tipping points in hysteresis loops.

\section{Preliminaries}
NODEs provide a learning framework for modeling continuous-time dynamical systems by parameterizing the governing dynamics with a neural network. A NODE models the state evolution as

\begin{equation}
    \frac{\mathrm{d}\vect{x}}{\mathrm{d}t} = F_\theta(\vect{x}, t),
    \label{eqn:gen-node}
\end{equation}

where \(\vect{x}(t) \in \mathbb{R}^d\) is the time-dependent state vector and \(F_\theta : \mathbb{R}^d \times \mathbb{R}^+ \to \mathbb{R}^d\) is a neural network parameterized by \(\theta \in \R^p\). Given an initial condition \(\vect{x}(0)\), system trajectories over the interval \(t \in [0, T]\) are obtained by solving the initial value problem
\begin{equation}
    \widehat{\vect{x}}(T) = \vect{x}(0) + \int_0^T F_\theta(\vect{x}(t), t)\, \mathrm{d}t.
    \label{eqn:node-ivp}
\end{equation}
The original formulation in \cite{chen2018neural} uses the \texttt{torchdiffeq} package to numerically integrate \cref{eqn:node-ivp}. However, in this work, we instead use the more recent \texttt{torchode} library \cite{Lienen_torchode_A_Parallel_2022}, which provides a modern, performance-optimized framework to solve batched NODE systems.

The model parameters \(\theta\) are learned by minimizing the discrepancy between the predicted trajectories and the observed states at the measurement times \(t_i\), for \(i =1, \ldots, N\):
\begin{equation}
    \mathcal{L}(\theta) = \frac{1}{N}\sum_{i=1}^{N} \big\| \vect{x}(t_i) - \widehat{\vect{x}}(t_i) \big\|_2^2.
    \label{eqn:NODE-loss}
\end{equation}
Gradients of this loss with respect to \(\theta\) are obtained by backpropagating through the \texttt{torchode} ODE solver using PyTorch's autograd. The solver employs reversible integrators and internal checkpointing, enabling accurate gradients while remaining memory efficient. Unlike the classical adjoint sensitivity method, this approach avoids numerical instabilities in the backward ODE solve and is well suited for large batched problems, making it preferable for our setting.

In addition to the use of the loss \cref{eqn:NODE-loss}, we also test a gradient matching technique for learning with the intent to avoid instabilities that can arise with NODE training objectives as in \cite{okamoto2025learningsimplestneuralode}. We describe both training objectives in \Cref{sec:numerical-results}.

\section{Methodology}

We consider modeling a controlled autonomous ODE system of the form
\begin{equation}
    \frac{\mathrm{d}\vect{x}}{\mathrm{d}t} = F(\vect{x}, \vect{u}),
\end{equation}
where \(\vect{x} \in D \subset \mathbb{R}^d\) denotes the system observables, assumed to be in a closed and convex bounded set \(D\), and \(\vect{u} \in \mathcal{U} \subset \mathbb{R}^q\) denotes control parameters, for some bounded subset \(\mathcal{U}\). We focus on systems whose trajectories remain bounded under bounded inputs and that exhibit only a small number of asymptotically stable equilibria; in particular, for constant inputs we do not expect limit cycles or chaotic attractors.

\subsection{System identification}

We propose that the true dynamics can identically be written as, or closely approximated by,
\begin{equation}
    F(\vect{x}, \vect{u}) \approx f_\theta(\vect{x}) \bigl(\vect{x} - g_\theta(\vect{x}, \vect{u})\bigr),
    \label{eqn:fandg}
\end{equation}
where 
\begin{equation}
    \begin{split}
        f_\theta &: D \to \mathbb{R}^d, 
        \qquad f_\theta(\vect{x}) < 0 \ \text{elementwise for all } \vect{x} \in D,\\[0.2em]
        g_\theta &: D \times \mathbb{R}^q \to D,
    \end{split}
    \label{eqn:fandg2}
\end{equation}

are neural networks parameterized by \(\theta \in \R^p\) and the product in \cref{eqn:fandg} is taken elementwise, so that each component of \(f_\theta(\vect{x})\) acts as a state-dependent decay rate pulling \(\vect{x}\) toward the target \(g_\theta(\vect{x},\vect{u})\). For brevity, we write \(\theta\) to refer collectively to the distinct parameters of \(f_\theta\) and \(g_\theta\); however, they are not necessarily shared.

In particular, the formulation \cref{eqn:fandg} does not impose significant limits on the systems that can be represented. Equality of \cref{eqn:fandg} can be achieved whenever defining \( g_{\theta}(x,u) = x + F(x,u)\) satisfies the condition \cref{eqn:fandg2} on appropriate \(D\), with the choice \(f_{\theta} \equiv -1\). Of course, there is no uniqueness to a splitting of the form \cref{eqn:fandg} and in practice we find many such  splittings can be learned as demonstrated in later numerical examples.

The construction in \cref{eqn:fandg,eqn:fandg2} guarantees trajectory stability across the entire state space \(D\). Previous works obtain stability by constraining a general NODE, of the form \cref{eqn:gen-node}, through a jointly learned Lyapunov function \cite{Kolter2019LearningSD}. However, these approaches are limited to learning a single basin of attraction. In contrast, our formulation is able to represent nontrivial multistable systems, such as hysteresis loops, from limited information and can be naturally extended to feedback control policies.

\subsection{Control}
If a system can be represented in the form \cref{eqn:fandg2} then steady states, denoted \(\vect{x}^*\), satisfy
\[
    \vect{x}^* = g_\theta(\vect{x}^*, \vect{u})
\]
for some \(\vect{u}\). Thus, if a steady state \(\vect{x}^*\) is achievable, stable or unstable, determining an appropriate control amounts to finding a solution to
\begin{equation}
    \argmin_{\vect{u} \in \mathbb{R}^q}
    \frac{1}{2} \bigl\| \vect{x}^* - g_\theta(\vect{x}^*, \vect{u}) \bigr\|_2^2.
    \label{eqn:control-resid}
\end{equation}
For a learned linear control structure, i.e., when \(g_\theta\) does not depend on \(\vect{x}\) and depends linearly on \(\vect{u}\), gradient descent applied to \cref{eqn:control-resid} converges exponentially to the minimum-norm solution (see \Cref{sec:solving-control}).

\subsection{Continuous feedback control}
The formulation above admits a natural extension to a continuous-time feedback control policy. For example, given a current state \(\vect{x}\), we may continuously update \(\vect{u}\) to drive the system toward a desired target \(\vect{x}^*\) with
\begin{equation}
    \frac{\mathrm{d}\vect{u}}{\mathrm{d}t}
    = -\eta \, \nabla_{\vect{u}} 
    \frac{1}{2}
    \bigl\| g_\theta^{\circ k}(\vect{x}, \vect{u}) - \vect{x}^* \bigr\|_2^2,
    \label{eqn:cont-control-dyn}
\end{equation}
where \(\eta > 0\) modulates the update strength, and \(g_\theta^{\circ k}\) denotes the \(k\)-fold iteration
\begin{equation}
    g^{\circ k}(\vect{x},\vect{u}) \triangleq g(\vect{x}_{k},\vect{u}) \quad \text{where} \quad \vect{x}_i = g(\vect{x}_{i-1},\vect{u}) \: \text{for} \: i=1,\hdots,k.
    \label{eqn:g-iter}
\end{equation}

Increasing \(k\) acts as an inexpensive approximate forecaster of the dynamics. There is a trade-off between larger \(k\) and computational efficiency when solving \cref{eqn:cont-control-dyn}, which can be mitigated by leveraging automatic differentiation. In the scalar case we show that increasing \(k\) can improve stability under the additional assumption \(|\frac{\partial}{\partial x} g_\theta(\vect{x}^*,\vect{u})| < 1\) (see \Cref{sec:1d-system}).
 We illustrate the effects of varying \(k\) and \(\eta\) on numerical examples in \Cref{sec:numerical-results}.

\subsection{Constrained feedback control}

The control policy in \cref{eqn:cont-control-dyn} can easily be extended to enforce boundary constraints on the control inputs. Namely, by using the form
\begin{equation}
        \frac{\mathrm{d}\vect{u}}{\mathrm{d}t}
    = -\eta \, \nabla_{\vect{u}} 
    \frac{1}{2}
    \bigl\| g_\theta^{\circ k}(\vect{x}, \vect{u}) - \vect{x}^* \bigr\|_2^2 \odot \phi(\vect{u})
    \label{eqn:constrained-control}
\end{equation}
where \(\odot\) is the Hadamard product and \(\phi(\vect{u}) \in \R^q\) is a vector composed of a linear combination of smooth Heaviside functions 
\begin{equation}
    \phi(u_i) = \sum_j \alpha_j H(u_i - \delta_j, \ell_j)
\end{equation}
where \(H(x,\ell) = 1 / (1 + \exp{(-\ell x)})\), \(\alpha_j = \pm 1\), \(\delta_j\) are the boundary limits for \(u_i\), and \(\ell_j\) are the rates of decay in control.

\section{Theoretical analysis of gradient-based control}
\label{sec:theoretical-analysis}
We analyze the stability properties of the equilibrium map \(g_\theta\)
and the convergence of the gradient-based feedback control policy
introduced in \cref{eqn:cont-control-dyn,eqn:constrained-control}.
We first establish local stability and contraction results for the
one-dimensional case, then examine the gradient flow for a tractable
linear simplification of the control objective \cref{eqn:control-resid}
to build intuition for the numerical experiments in \Cref{sec:numerical-results}.

\subsection{One dimensional systems}

\label{sec:1d-system}

For scalar dynamical systems of the form \cref{eqn:fandg} we provide criteria for exponential stability of equilibria, and stability of the iteration in \cref{eqn:g-iter} for control.
\begin{lemma}[Linear stability condition]\label{lem:linear-stability}
Let \(x^* \in \R\) be a fixed point of the scalar system
\[
    \frac{\mathrm{d}x}{\mathrm{d}t} = f(x)\bigl(x - g(x,u)\bigr),
\]
for a given constant input \(u \in \mathcal{U}\) (i.e. \(x^* = g(x^*,u)\)). Suppose \(f\in C^1(\R)\)  with \(f(x^*)<0\), and 
the partial map \(g(\cdot, u) \in C^1(\R)\).  
Then \(x^*\) is a locally exponentially stable equilibrium if and only if
\[
    \frac{\partial}{\partial x} g(x^*,u) < 1.
\]
\end{lemma}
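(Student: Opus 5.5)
We linearize the scalar vector field \(F(x) = f(x)\bigl(x - g(x,u)\bigr)\) at the fixed point \(x^*\) and read off the sign of the single eigenvalue \(F'(x^*)\). Both \(f\) and \(g(\cdot,u)\) are \(C^1\), so \(F\) is \(C^1\) near \(x^*\). The product rule gives
\[
F'(x^*) = f'(x^*)\bigl(x^* - g(x^*,u)\bigr) + f(x^*)\Bigl(1 - \tfrac{\partial}{\partial x} g(x^*,u)\Bigr) = f(x^*)\Bigl(1 - \tfrac{\partial}{\partial x} g(x^*,u)\Bigr).
\]
The first term vanishes because \(x^* = g(x^*,u)\). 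We have \(f(x^*)<0\), so \(F'(x^*)<0\) holds exactly when \(\tfrac{\partial}{\partial x} g(x^*,u) < 1\).

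\emph{Sufficiency.} Set \(\lambda = F'(x^*) < 0\). By continuity of \(F'\), there is a neighborhood \(|x - x^*| < r\) on which \(F'(\xi) \le \lambda/2\). The mean value theorem then gives \(F(x) = F'(\xi)(x - x^*)\) for some \(\xi\) between \(x\) and \(x^*\). Taking \(V = \tfrac12 (x-x^*)^2\), we get \(\dot V = (x-x^*)F(x) \le \tfrac{\lambda}{2}(x-x^*)^2 = \lambda V\). This keeps trajectories inside the neighborhood and yields \(|x(t)-x^*| \le e^{\lambda t/2}|x(0)-x^*|\), i.e. local exponential stability. One could equally cite the standard linearization theorem (Lyapunov's indirect method), but this one-dimensional computation makes the argument self-contained.

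\emph{Necessity.} This is the main obstacle. If \(\tfrac{\partial}{\partial x} g(x^*,u) > 1\), then \(F'(x^*) > 0\), and the same mean-value argument gives \(\dot V \ge cV\) with \(c>0\) near \(x^*\). Nearby trajectories are therefore repelled, so \(x^*\) is unstable. The delicate case is \(\tfrac{\partial}{\partial x} g(x^*,u) = 1\), where \(F'(x^*) = 0\) and linearization is inconclusive. Here I would argue directly that exponential stability fails. Suppose \(|x(t)-x^*| \le M e^{-\alpha t}|x(0)-x^*|\) for some \(\alpha>0\). Because \(F(x) = o(|x-x^*|)\) as \(x\to x^*\) (the derivative vanishes), for any \(\epsilon>0\) we have \(|F(x)| \le \epsilon|x-x^*|\) near \(x^*\). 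Then \(\tfrac{d}{dt}|x-x^*| \ge -\epsilon|x-x^*|\), which gives \(|x(t)-x^*| \ge e^{-\epsilon t}|x(0)-x^*|\). Choosing \(\epsilon < \alpha\) contradicts the assumed decay for large \(t\). The trajectory stays in the neighborhood precisely because of the assumed stability bound, which keeps it close. This settles the borderline case: at best the decay is subexponential.

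Combining the two directions gives the claimed equivalence. Throughout, the argument relies only on the fixed-point identity \(x^* = g(x^*,u)\), which removes the \(f'\) term, and on the strict sign \(f(x^*)<0\), which transfers the sign condition from \(1 - \partial_x g\) to \(F'(x^*)\).
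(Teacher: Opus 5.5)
Your proof is correct and follows the same route as the paper: differentiate \(F(x)=f(x)\bigl(x-g(x,u)\bigr)\), use \(x^*=g(x^*,u)\) to remove the \(f'\) term, and use \(f(x^*)<0\) to turn \(F'(x^*)<0\) into \(\frac{\partial}{\partial x}g(x^*,u)<1\). The paper simply asserts that local exponential stability is equivalent to \(F'(x^*)<0\), whereas you prove both directions of that equivalence, including the degenerate case \(\frac{\partial}{\partial x}g(x^*,u)=1\) where linearization alone is inconclusive, so your argument is the more complete one.
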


\begin{proof}
Define \(F(x,u) := f(x)\bigl(x-g(x,u)\bigr)\) then we have
\[
    \frac{\partial}{\partial x} F(x,u) = f'(x)\bigl(x-g(x,u)\bigr) + f(x)\bigl(1-\frac{\partial}{\partial x} g(x,u)\bigr).
\]
At \(x=x^*\) the first term vanishes because \(x^*=g(x^*,u)\); therefore
\[
    \frac{\partial}{\partial x} F(x,u)\big|_{x=x^*} = f(x^*)\bigl(1-\frac{\partial}{\partial x} g(x^*,u)\bigr).
\]
Exponential stability is given by \(\frac{\partial}{\partial x} F(x^*,u)<0\). Since \(f(x^*)<0\), this is equivalent to \(1-\frac{\partial}{\partial x} g(x^*,u)>0\), i.e. \(\frac{\partial}{\partial x} g(x^*,u)<1\). 
\end{proof}

\begin{theorem}[Local contraction of \(g\)]\label{thm:local-contraction}
Under the hypotheses of \cref{lem:linear-stability}, assume in addition that
\[
    \left|\frac{\partial}{\partial x} g(x^*,u)\right| < 1.
\]
Then there exist constants \(L\in(0,1)\) and \(r>0\) such that \(g(\cdot,u)\) is a contraction on the closed ball \(B_r(x^*)\) with Lipschitz constant \(L\), i.e.
\[
    |g(x,u) - g(y,u)| \le L |x-y|, \quad \text{for}\quad x,y\in B_r(x^*),
\]
and moreover \(g(B_r(x^*))\subset B_r(x^*)\).
\end{theorem}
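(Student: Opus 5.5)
The argument uses only continuity of the derivative of \(g(\cdot,u)\) at \(x^*\) and the mean value theorem. The hypotheses of \cref{lem:linear-stability} supply \(g(\cdot,u)\in C^1(\R)\) and the fixed-point relation \(x^* = g(x^*,u)\). The hypothesis on \(f\) plays no role here; it matters only for interpreting \(x^*\) as a stable equilibrium.

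\textbf{Step 1: choose \(L\) and \(r\).} Set \(a := \bigl|\tfrac{\partial}{\partial x} g(x^*,u)\bigr| < 1\) and fix any \(L \in (a,1)\), for instance \(L = (1+a)/2\). Then \(L\in(0,1)\) even when \(a=0\). The map \(x\mapsto \bigl|\tfrac{\partial}{\partial x} g(x,u)\bigr|\) is continuous, so there is \(r>0\) with
\[
\Bigl|\tfrac{\partial}{\partial x} g(z,u)\Bigr| \le L \quad \text{for all } z \in B_r(x^*) = [x^*-r,\,x^*+r].
\]

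\textbf{Step 2: Lipschitz bound.} Take \(x,y\in B_r(x^*)\). The interval is convex, so the segment between \(x\) and \(y\) lies in \(B_r(x^*)\). By the mean value theorem there is some \(z\) on that segment with
\[
g(x,u)-g(y,u) = \tfrac{\partial}{\partial x} g(z,u)\,(x-y).
\]
Hence \(|g(x,u)-g(y,u)|\le L|x-y|\).

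\textbf{Step 3: invariance.} Apply Step 2 with \(y=x^*\) and use \(g(x^*,u)=x^*\):
\[
|g(x,u)-x^*| \le L|x-x^*| \le Lr < r.
\]
Therefore \(g(B_r(x^*))\subset B_r(x^*)\).

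There is no real obstacle. The only point requiring care is the choice in Step 1: \(L\) must lie strictly between \(a\) and \(1\) so that it bounds the derivative on a whole neighborhood while staying below \(1\). The convexity of the ball is what lets the mean value theorem apply in Step 2.

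As a consequence, since \(B_r(x^*)\) is closed in \(\R\) and therefore complete, the Banach fixed-point theorem shows that the iteration \cref{eqn:g-iter} converges to \(x^*\) from any starting point in \(B_r(x^*)\).
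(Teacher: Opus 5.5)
Your proof is correct and follows the paper's argument step for step. You use continuity of $\frac{\partial}{\partial x} g(\cdot,u)$ at $x^*$ to get a uniform bound $L<1$ on a closed ball, a mean-value estimate for the Lipschitz bound (the pointwise MVT, where the paper uses the integral form), and the fixed-point relation with $y=x^*$ for invariance. Your explicit choice $L=(1+a)/2$ and your non-strict step $L|x-x^*|\le Lr$ are slightly more careful than the paper, whose strict inequality $L|x-x^*|<Lr$ fails on the boundary $|x-x^*|=r$ of the closed ball.
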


\begin{proof}
Because \(g\) is \(C^1\) in \(x\) and \(|\frac{\partial}{\partial x} g(x^*,u)|<1\), continuity of \(\frac{\partial}{\partial x} g(\cdot,u)\) gives \(r>0\) and \(L\in(0,1)\) such that
\[
    \sup_{x\in B_r(x^*)} \left|\frac{\partial}{\partial x} g(x,u)\right| \le L.
\]
For any \(x,y\in B_r(x^*)\) the mean-value (integral) form yields
\[
    |g(x,u)-g(y,u)| = \left|\int_y^x \frac{\partial}{\partial x} g(s,u)\,\mathrm{d}s\right| \le L|x-y|,
\]
so \(g\) is Lipschitz on \(B_r(x^*)\) with constant \(L<1\). Taking \(y=x^*\) we obtain
\[
    |g(x,u)-x^*| \le L|x-x^*| < Lr < r,
\]
hence \(g(B_r(x^*))\subset B_r(x^*)\).
\end{proof}

\begin{corollary}[Geometric convergence of the iteration \cref{eqn:g-iter}]\label{cor:geometric}
Under the conditions of \cref{thm:local-contraction}, for any \(x_0\in B_r(x^*)\) the sequence
\[
    x_{k+1} = g(x_k,u)
\]
converges to \(x^*\) and satisfies the geometric bound
\[
    |x_k - x^*| \le L^k |x_0 - x^*|.
\]
\end{corollary}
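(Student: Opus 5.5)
The plan is to apply \cref{thm:local-contraction} and argue by induction on \(k\), which is the standard Banach fixed-point argument restricted to the invariant ball. \Cref{thm:local-contraction} gives \(L\in(0,1)\) and \(r>0\) such that \(g(\cdot,u)\) is \(L\)-Lipschitz on \(B_r(x^*)\) and maps \(B_r(x^*)\) into itself.

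The first step is well-posedness of the iteration. We show by induction that every iterate \(x_k\) lies in \(B_r(x^*)\). The base case \(x_0\in B_r(x^*)\) is the hypothesis. For the inductive step, if \(x_k\in B_r(x^*)\), then \(x_{k+1}=g(x_k,u)\in g(B_r(x^*))\subset B_r(x^*)\). This matters because the Lipschitz bound is only local, so every later application of it needs both arguments inside the ball.

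The second step is the geometric estimate, again by induction. Since \(x^*=g(x^*,u)\) and both \(x_k\) and \(x^*\) lie in \(B_r(x^*)\), the Lipschitz bound gives
\[
|x_{k+1}-x^*| = |g(x_k,u)-g(x^*,u)| \le L|x_k-x^*|.
\]
Iterating this from the trivial case \(k=0\) yields \(|x_k-x^*|\le L^k|x_0-x^*|\).

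Convergence then follows immediately, since \(L^k\to 0\) when \(L<1\).

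There is no real obstacle here. The only point requiring care is the invariance step, because it is what licenses each application of the local Lipschitz bound, and \cref{thm:local-contraction} already supplies the invariance \(g(B_r(x^*))\subset B_r(x^*)\).
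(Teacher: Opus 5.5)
Your proof is correct and is essentially the paper's argument: the paper invokes the Banach fixed-point theorem for \(g(\cdot,u)\) on the invariant ball \(B_r(x^*)\) supplied by \cref{thm:local-contraction}. Your induction (invariance of the ball, then \(|x_{k+1}-x^*|\le L|x_k-x^*|\)) writes that argument out explicitly, and it has the small advantage of giving the stated bound \(|x_k-x^*|\le L^k|x_0-x^*|\) directly, without needing completeness of the ball or uniqueness of the fixed point.
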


\begin{proof}
By \cref{thm:local-contraction} the map \(g(\cdot,u)\) is a contraction on the complete metric space \(B_r(x^*)\) with unique fixed point \(x^*\). The Banach fixed-point theorem yields the stated convergence and bound.
\end{proof}

\begin{remark}[Pointwise derivative vs.\ uniform contraction]
The condition \(\|\frac{\partial}{\partial x} g(x^*,u)\| < 1\) is a \emph{pointwise} requirement ensuring that the
linearization of \(g\) at the fixed point is locally stable.  
This guarantees that the ODE 
\[
    \frac{\mathrm{d}x}{\mathrm{d}t} = f(x)\bigl(x-g(x,u)\bigr)
\]
has a stable equilibrium at \(x^*\).

However, this condition alone does \emph{not} imply that \(g\) is a contraction on any neighborhood of
\(x^*\).  The Banach fixed-point theorem requires a \emph{uniform} bound: there must exist a ball
\(B_r(x^*)\) and a constant \(L<1\) such that
\[
    \sup_{x\in B_r(x^*)} \left\|\frac{\partial}{\partial x} g(x,u)\right\| \le L < 1.
\]
This ensures the global Lipschitz inequality
\[
    \|g(x,u)-g(y,u)\| \le L \|x-y\|,\quad \text{for} \quad x,y\in B_r(x^*),
\]
which is strictly stronger than the pointwise condition at \(x^*\).

Thus, while \(\|\frac{\partial}{\partial x} g(x^*,u)\| < 1\) guarantees linear stability of the ODE equilibrium, it may
still happen that
\[
    \sup_{x \in B_r(x^*)} \left\|\frac{\partial}{\partial x} g(x,u)\right\| \;\ge\; 1
    \qquad\text{for every } r>0,
\]
in which case \(g\) fails to be a contraction on any neighborhood, and the discrete iteration
\(x_{k+1}=g(x_k,u)\) need not converge. Thus, an increased number of iterations does not necessarily improve stability automatically when solving for control.
\end{remark}

\subsection{Gradient descent} 

\label{sec:solving-control}
Having established local stability and contraction of \(g_\theta\) 
in \Cref{sec:1d-system}, we now analyze the control objective 
introduced in \cref{eqn:control-resid}. To build intuition for the use of continuous-time feedback policies in the later numerical examples, \Cref{sec:numerical-results}, we examine an analytically tractable simplification where \(g\) is constant with respect to \(\vect{x}\), and linear with respect to \(\vect{u}\), i.e., \(g(\vect{x},\vect{u}) = G\vect{u}\) for a matrix \(G\). The objective \cref{eqn:control-resid} becomes
\begin{equation}
    \mathcal{L}(\vect{u}) = \frac{1}{2} \|G\vect{u} - \vect{x}^*\|_2^2,
    \label{eqn:linear-obj}
\end{equation}
where \(\vect{x}^*\) is the desired state. If \(G\) is full rank, the minimizer is
\begin{equation}
    \vect{u}^\star = (G^\intercal G)^{-1} G^\intercal\vect{x}^*.
\end{equation}
If \(G\) is rank-deficient, standard approaches include using the Moore-Penrose pseudoinverse,
\begin{equation}
    G^\dagger = V \Sigma^\dagger U^\intercal,
\end{equation}
where \(\Sigma^\dagger\) contains the reciprocals of the nonzero singular values, or adding a small ``nugget'' \(\lambda\) to handle singularity:
\begin{equation}
    \vect{u}^\star = (G^\intercal G + \lambda I)^{-1} G^\intercal \vect{x}^*.
\end{equation}

A minimizer of \eqref{eqn:linear-obj} over \(\vect{u} \in \mathcal{U}\) can be solved via gradient descent:
\begin{equation}
    \vect{u}_{k+1} = \vect{u}_k - \eta \nabla \mathcal{L}(\vect{u}_k) = \vect{u}_k - \eta G^\intercal (G \vect{u}_k - \vect{x}^*),
\end{equation}
which is a linear iteration for the quadratic objective
\begin{equation}
    \mathcal{L}(\vect{u}) = \frac{1}{2} \vect{u}^\intercal H \vect{u} + \vect{b}^\intercal \vect{u} + \frac{1}{2} (\vect{x}^*)^\intercal \vect{x}^*, \quad H = G^\intercal G, \quad \vect{b} = -G^\intercal \vect{x}^*.
\end{equation}
Let \(\sigma_{\max}(G)\) and \(\sigma_{\min}(G)\) denote the largest and smallest nonzero singular values of \(G\). Then the gradient is \(L\)-Lipschitz with
\begin{equation}
    L = \sigma_{\max}(G)^2,
\end{equation}
and if \(G\) is full rank, the objective is strongly convex with
\begin{equation}
    \mu = \sigma_{\min}(G)^2 > 0,
\end{equation}
giving condition number \(\kappa(H) = L / \mu = \kappa(G)^2\). 

For gradient descent with step size \(\eta^\star = 2/(L+\mu)\),
\begin{equation}
    \|\vect{u}_k - \vect{u}^\star\| \le \rho(\eta^\star)^k \|\vect{u}_0 - \vect{u}^\star\|, \quad \rho(\eta^\star) = \frac{\kappa(G)^2 - 1}{\kappa(G)^2 + 1}.
\end{equation}
Thus, if \(\kappa(G) \sim 1\), \(\rho \sim 0\) yielding rapid convergence. To achieve a tolerance \(\epsilon\),
\begin{equation}
    k = \log_{1/\rho} \frac{\|\vect{u}_0 - \vect{u}^\star\|}{\epsilon}.
\end{equation}

\subsection{Continuous-time gradient flow} 
Assuming \(G\) is full rank, feedback control can be updated in continuous time:
\begin{equation}
\begin{split}
    \frac{\mathrm{d}\vect{u}}{\mathrm{d}t} &= -\frac{\eta}{2} \nabla \|G\vect{u} - \vect{x}^*\|_2^2 = -\eta G^\intercal(G\vect{u} - \vect{x}^*) \\
           &= -\eta H(\vect{u} - \vect{u}^\star),
\end{split}
\end{equation}
with error \(\vect{e}(t) = \vect{u}(t) - \vect{u}^\star\) evolving as
\begin{equation}
    \frac{\mathrm{d}\vect{e}}{\mathrm{d}t} = -\eta H \vect{e} \quad \Rightarrow \quad \vect{e}(t) = e^{-\eta H t} \vect{e}(0),
\end{equation}
giving
\begin{equation}
    \|\vect{u}(t) - \vect{u}^\star\|_2 = \|e^{-\eta H t} (\vect{u}(0) - \vect{u}^\star)\| \le e^{-\eta \sigma_{\min}(G)^2 t} \|\vect{u}(0) - \vect{u}^\star\|,
\end{equation}
i.e., exponential convergence. Gradient descent is simply an explicit discretization of this continuous-time flow.

\subsection{Rank-deficient case} 
If \(G\) is rank-deficient, adding a regularization term \( \frac{\lambda}{2} \|\vect{u}\|_2^2 \) ensures a non-singular Hessian:
\begin{equation}
    \nabla \mathcal{L}(\vect{u}) = G^\intercal (G\vect{u} - \vect{x}^*) + \lambda \vect{u}, \quad \nabla^2 \mathcal{L}(\vect{u}) = G^\intercal G + \lambda I.
\end{equation}
This is equivalent to Tikhonov (ridge) regularization.

\section{Numerical results}
\label{sec:numerical-results}
We test our method on several nonlinear examples, demonstrating reliable short-horizon learning and accurate, noise-robust feedback control across systems that exhibit nontrivial bifurcations, and demonstrating the ease of control constraint implementation. All experiments and code are made publicly available at \cite{pnnl_ftnode}.

\subsection*{Metrics}
We report a normalized root mean square error (nRMSE) for feedback control experiments where
\begin{equation}
    \text{nRMSE} = \frac{\sqrt{\frac{1}{n}\sum_{i=1}^n \big(x(t_i) - x^*\big)^2}}{\text{system magnitude}},
\end{equation}
is evaluated at the end of an allotted time to reach a randomized target. Systems are formulated as stochastic differential equations (SDEs) and are solved using a Euler--Maruyama solver with randomized, pregenerated, noise. 

The system magnitude is defined as the range of trajectory outputs when these values are easily known, and otherwise is defined as the interquartile range (IQR), defined as the difference between the 75th and 25th percentiles of all observed trajectories. For multidimensional systems, nRMSE is computed and reported independently for each dimension.

\subsection*{Model architecture and cross validation}
For all examples, we use multi-layer perceptrons (MLPs) for both \(f_\theta\) and \(g_\theta\), equipped with Sigmoid-weighted Linear Unit (\texttt{SiLU}) activation functions and sigmoidal output layers. The specific bounds and layer widths are specified for each example.

We perform a \(10\)-fold cross validation across various model architectures to select a candidate model for learning based on the best average validation score. Due to the inherent nature of the problem, specifically dynamics learning with delicate bifurcation information, we do a final training on all available data, and select the best performing model. This is standard in scientific and medical contexts where scarce data are present \cite{Tsamardinos2022, Bradshaw2023}.  We note that we avoid the typical pitfall of highly oscillatory overfitting due to asymptotically enforced stability of trajectories in the model design.

\subsection*{Training}
 We use an Adam optimizer \cite{Kingma2014AdamMA} equipped with a \texttt{ReduceLROnPlateau} scheduler, with specific hyperparameters, such as the number of training epochs, batch size, and initial learning rate specified per example. 

 We use two training objectives given in \cref{eqn:train-obj} that we will refer to as \textit{Trajectory matching} and \textit{Gradient matching}. Trajectory matching falls within the standard context of NODE training: using backpropagation to obtain simulated trajectories from a model to compare against true observed trajectories. For gradient matching, we instead compare the learned model against gradient information from the observed trajectory, computed using centered finite difference with forward and backward Euler at the respective boundaries. That is, given a set of observables, \(\{\vect{x}_i\}_{i=1}^n\), we define our tunable model as 
 
\begin{equation}
\begin{split}
    F_{\theta}(\vect{x},\vect{u}) &= f_{\theta}(\vect{x})(\vect{x}-g_\theta(\vect{x},\vect{u}))\\
    \widehat{\vect{x}}_\theta(t) &=\texttt{ODESolve}(F_\theta, \vect{x}_0, \vect{u}_0, t)\\
    \end{split}
\end{equation}
and the finite difference scheme as 
\begin{equation}
D(\vect{x}_i) =
\begin{cases}
    \frac{\vect{x}_{i+1}-\vect{x}_{i-1}}{t_{i+1}-t_{i-1}}, & i = 2, \cdots, n-1\\
    \frac{\vect{x}_2-\vect{x}_1}{t_2 - t_1}, & i=1\\
    \frac{\vect{x}_n - \vect{x}_{n-1}}{t_{n}-t_{n-1}}, & i=n,
    
\end{cases}
\end{equation}
where we note that our observations \(t_i\) are uniform. We define our two training objectives as
\begin{equation}
\begin{split}
    \mathcal{L}_{\text{Traj}}(\theta) &\triangleq \frac{1}{n}\sum_{i=1}^n\|\vect{x}_i - \widehat{\vect{x}}_\theta(t_i)\|_2^2\\
    \mathcal{L}_{\text{Grad}}(\theta)&\triangleq \frac{1}{n}\sum_{i=1 }^n \|D(\vect{x}_i)-F_\theta(\vect{x}_i, \vect{u}_i)\|_2^2
\end{split}
\label{eqn:train-obj}
\end{equation}
where the appropriate training objective is determined through preliminary empirical evaluation.

\subsection{Mixing tanks}
\label{sec:mixing-tanks}
We consider a pair of connected tanks controlled by a single pump \(p\) and a two-way valve \(v\), which allocates the inflow between the two tanks. This setup provides a simplified model of pumped-storage hydroelectricity used for load balancing \cite{DOE_PumpedStorageHydropower}. The system dynamics are given by 
\begin{equation}
    \begin{split}
        \frac{\mathrm{d} x_1}{\mathrm{d}t} &= c_1^{\text{in}}(1-v)p - c_1^{\text{out}} \sqrt{x_1},\\
        \frac{\mathrm{d} x_2}{\mathrm{d}t} &= c_2^{\text{in}} v p + c_1^{\text{out}} \sqrt{x_1} - c_2^{\text{out}} \sqrt{x_2},
    \end{split}
    \label{eqn:two-tanks}
\end{equation}
where \(x_1\) and \(x_2\) denote the tank levels. The flow-rate coefficients are defined as
\begin{equation}
    \begin{split}
        c_1^{\text{in}} &\triangleq a_1\bigl(1 - \sigma_\ell(x_1 - 1)\bigr),\\
        c_1^{\text{out}} &\triangleq a_2\bigl(1 - \sigma_\ell(x_2 - 1)\bigr),\\
         c_2^{\text{in}} &\triangleq a_3\bigl(1 - \sigma_\ell(x_2 - 1)\bigr),\\
        c_2^{\text{out}} &\triangleq a_4,
    \end{split}
    \label{eqn:flow-rate-coeff}
\end{equation}
with \(\sigma_\ell(x) = 1/(1+\exp(-\ell x))\) to maintain physically meaningful dynamics.  We let \(a_1 = a_3= 0.08\), \(a_2 = a_4 = 0.02\) for the flow rate coefficients and \(\ell = 50\) be the boundary rate decay. The parameters \(a_1\), \(a_2\), and \(a_3\) correspond to the unconstrained flow coefficients, while \(c_2^{\text{out}}\) remains constant---this can be interpreted as tank 2 draining into an unconstrained reservoir.

For model training, we simulate trajectories of \cref{eqn:two-tanks} up to a fixed \(t=200\) with initial conditions \(\vect{x}(0) = (x_0, x_0)\) where \(x_0 = 0.05 \cdot i\) for \(i=0, \hdots, 20\) and control configurations \(p,v = 0.1 \cdot i\) for \(i=1, \hdots, 9\). The experimental setup can be seen in \Cref{fig:two-tanks-exp}.

\subsubsection*{System identification} 

Using gradient matching \(\mathcal{L}_{\text{Grad}}\) \cref{eqn:train-obj}, we perform a 10-fold cross validation for 1000 epochs, batch size of 50, and an initial learning rate \(\eta =0.01\). The selected model was a 2-20-20-20-2 and 4-20-20-20-2 fully connected MLP with SiLU activation functions for \(f_\theta\) and \(g_\theta\), respectively, achieving an average best validation loss of \(2.457 \times 10^{-6}\) across all folds. The model is then tuned on all the data with the same training configuration with the best model achieving a loss of \(4.247 \times 10^{-7}\). Model outputs are \((-1,0)\) and \((0,1)\) for \(f_\theta\) and \(g_\theta\), respectively. The inputs to \(g_\theta\) are the concatenated states \(\vect{x} = (x_1,x_2)\) and controls \(\vect{u} = (p,v)\). 

Model-simulated trajectories using \(f_\theta(\vect{x})\bigl(\vect{x} - g_\theta(\vect{x},\vect{u})\bigr)\) are shown in \Cref{fig:mixing-tanks-pp} while the predicted steady states obtained directly from \(g_\theta\) by solving the root-finding problem \(r(\vect{x}) = \vect{x} - g_\theta(\vect{x},\vect{u})\) are shown in \Cref{fig:g-err}.

\begin{figure}
    \centering

    \begin{minipage}{0.9\linewidth}
        \centering
        \includegraphics[width=\linewidth]{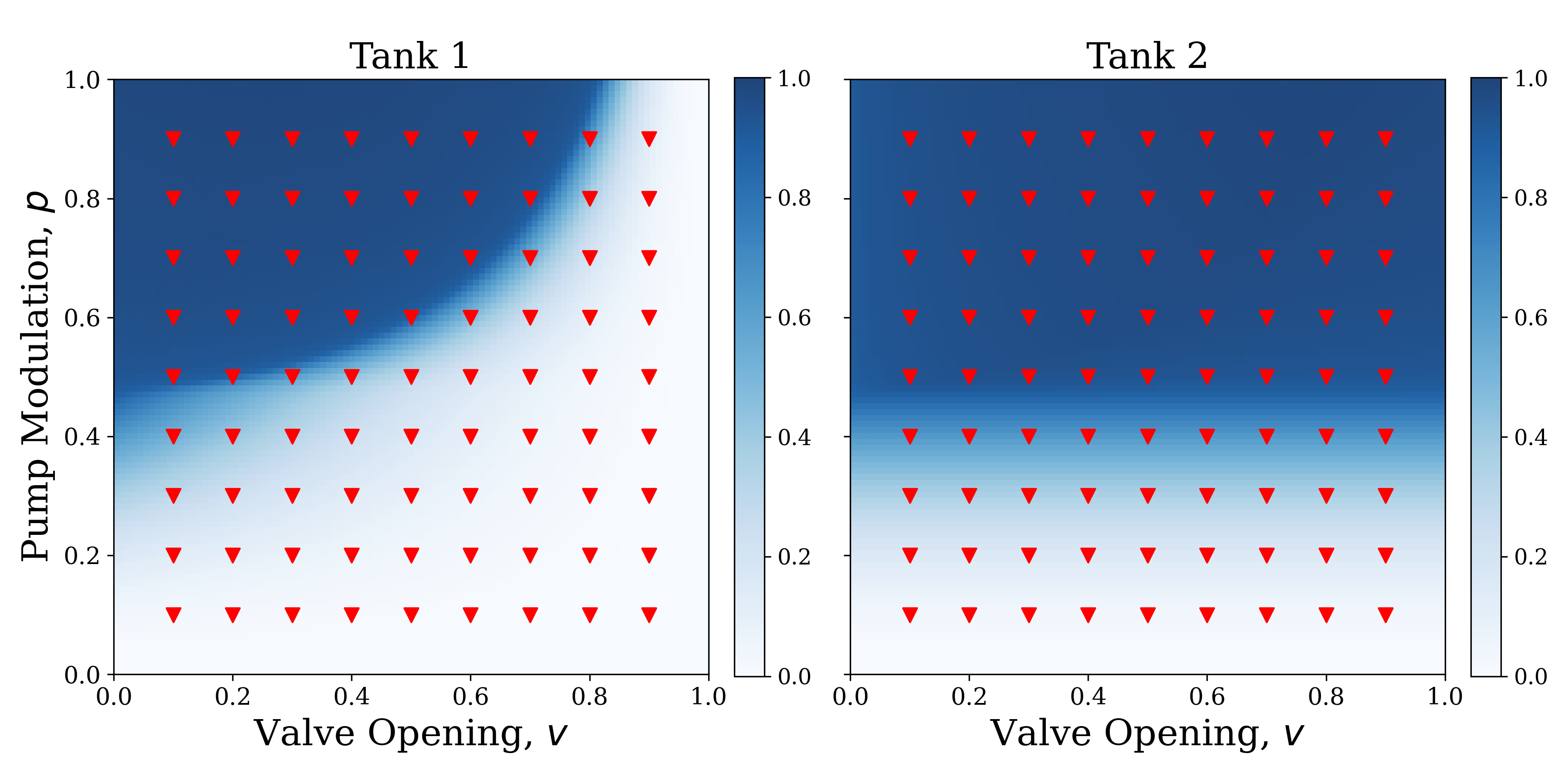}
    \end{minipage}

    \vspace{0pt}

    \begin{minipage}{0.9\linewidth}
        \centering
        \includegraphics[width=\linewidth]{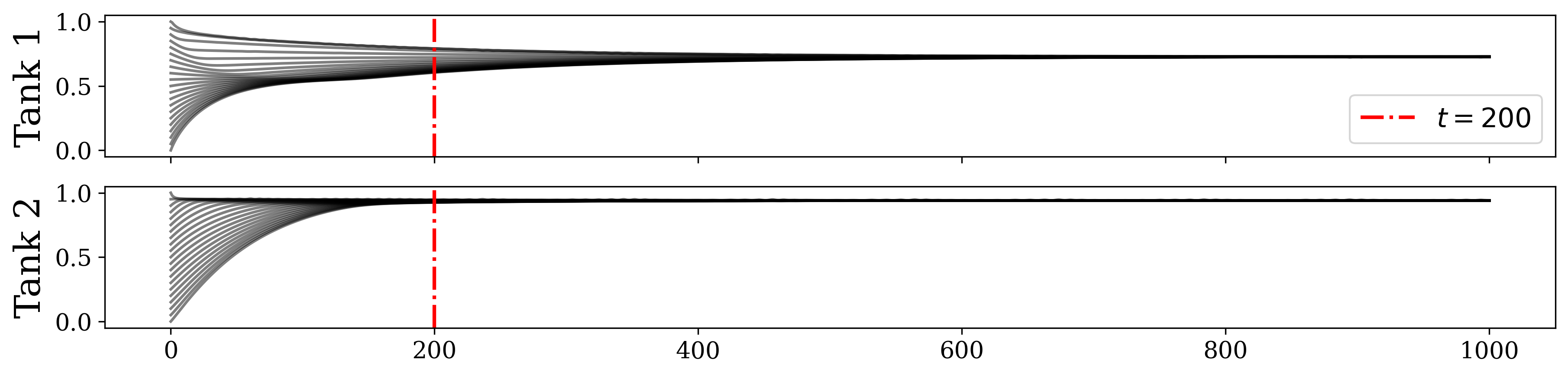}
    \end{minipage}

    \caption{\textbf{Top:} Simulated steady states at \(t = 1000\) for control settings \(v,p \in [0,1]\). Red triangles indicate training configurations \(v,p = 0.1 + 0.1 i\) for \(i = 0,\dots,8\).  \textbf{Bottom:} 21 training trajectories for the configuration \(p =  0.5\) and \(v = 0.25\). Vertical line denotes training cutoff, \(t=200\).}
    \label{fig:two-tanks-exp}
\end{figure}

\begin{figure}[t]
    \centering

    \includegraphics[width=0.7\linewidth]{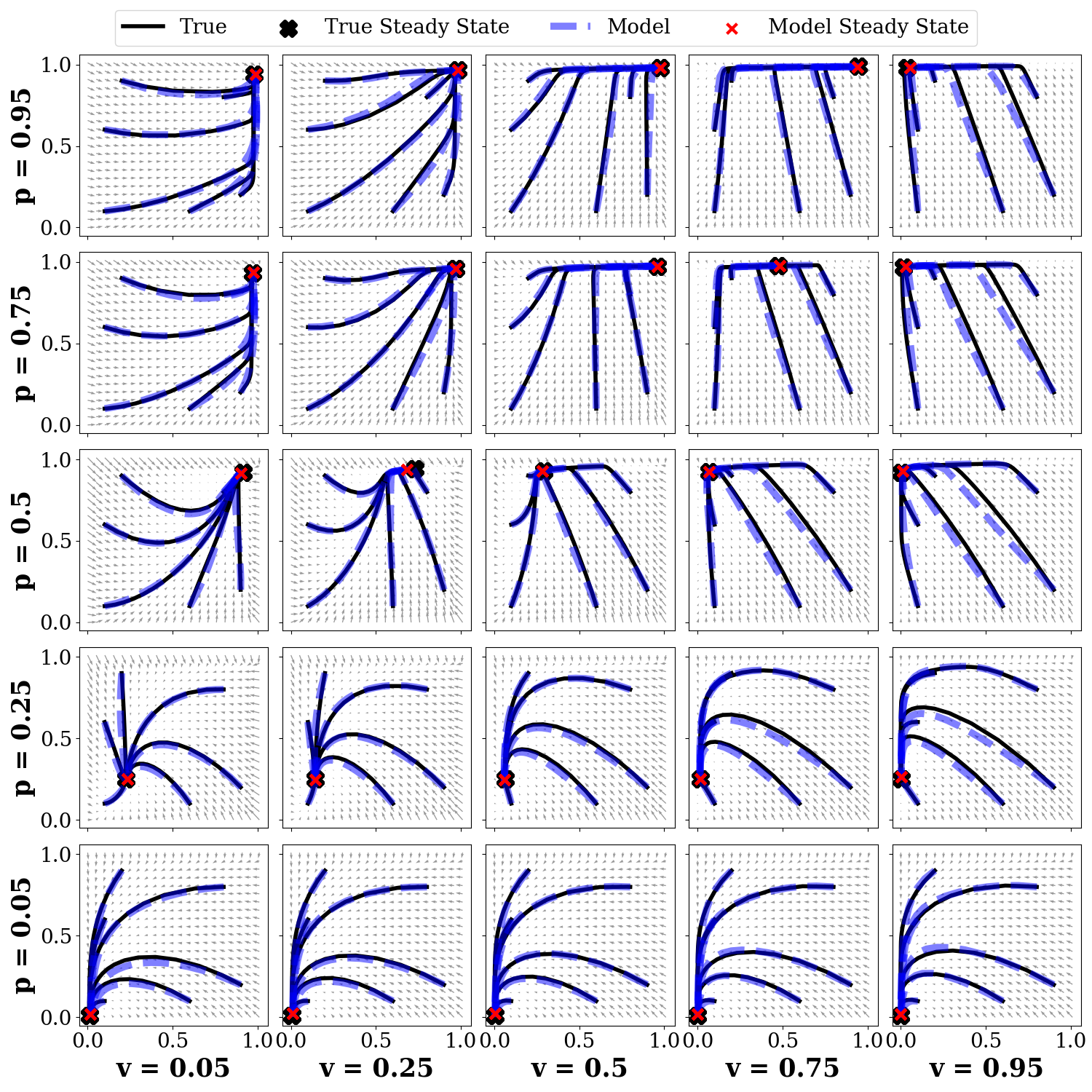}
    \caption{Six trajectories with initial conditions \(\vect{x}(0)\) = (\(0.1\), \(0.1\)),
    (\(0.9\), \(0.2\)),
    (\(0.2\), \(0.9\)),
    (\(0.1\), \(0.6\)),
    (\(0.6\), \(0.1\)), and  (\(0.8\), \(0.8\))
    for 25 control configurations plotted on true phase portraits generated  from  \cref{eqn:two-tanks} (black line) compared against model simulated trajectories from \(f_\theta(\vect{x})(\vect{x}-g_\theta(\vect{x}, \vect{u}))\) (blue dashed line).  True and model steady states are denoted by black and red crosses, respectively.}
    \label{fig:mixing-tanks-pp}
\end{figure}

\begin{figure}
    \centering
    \includegraphics[width=0.9\linewidth]{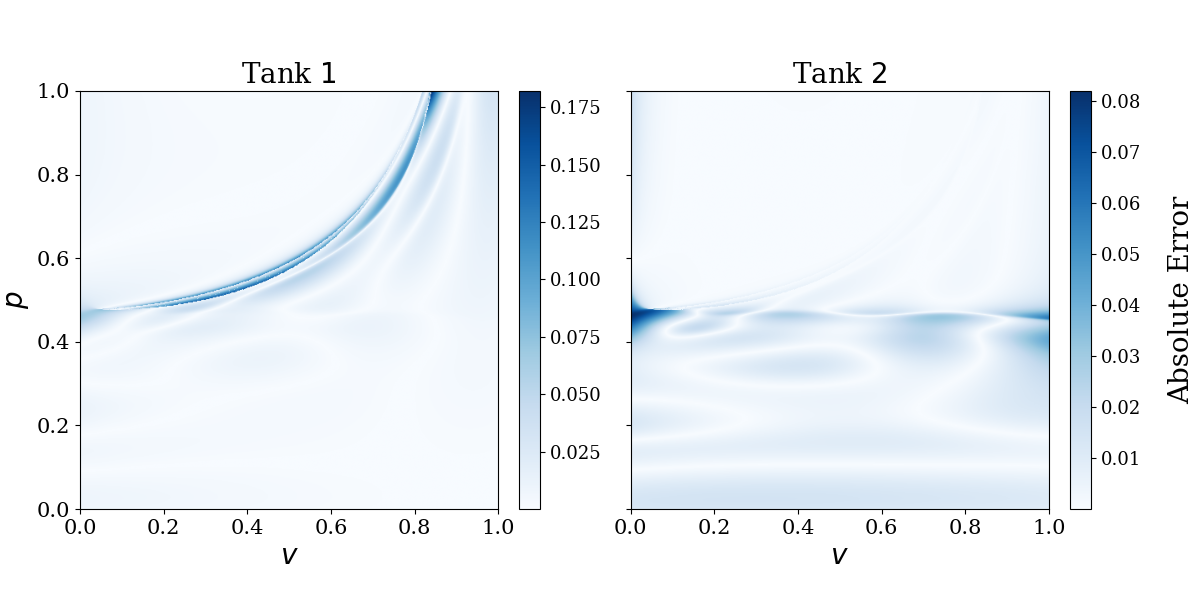}
    \caption{ Absolute error, for tanks 1 and 2, between model and true steady states for control values \(p,v \in [0,1]\). Model steady states are obtained from solving associated root finding problem, \(\vect{x} = g_\theta(\vect{x}, \vect{u})\). 
    True steady states obtained from solving system numerically at \(t=1000\).}
    \label{fig:g-err}
\end{figure}

\subsubsection*{Feedback control}
We conducted \(100\) trials in which the system was steered toward 10 randomized targets, perturbed by state-dependent multiplicative noise, using the constrained feedback control policy given in \cref{eqn:constrained-control}. The targets were generated by simulating the true dynamics under randomly sampled control configurations. Noise is added only to the tank states with covariance given by
\[
G(\vect{x}) = \mathrm{diag}\!\left(\sigma \sqrt{|x_1|},\, \sigma \sqrt{|x_2|}\right),
\]
so that the noise amplitude scales with the water level, while the control dynamics remain noise-free.

The control constraint, for \(\vect{u} = (p,v)\), is given by 
\begin{equation}
    \phi(u_i) =   H(u_i - \delta_1, \ell) -H(u_i - \delta_2, \ell) \in \R^2, \quad u_i = p,v
    \label{eqn:tank-control-constraint}
\end{equation}

where \(\delta_1= .05\), \(\delta_2 = 0.95\), and \(\ell=50\), enforcing \(p,v \in (0,1)\) maintaining meaningful physical interpretation.

We report the average nRMSE computed over the final \(20\%\) of the allotted time for reaching each target, namely \(t=500\). The average nRMSE across all steady states was \(1.54 \times 10^{-2} \) with standard deviation \( 1.38 \times 10^{-2}\) for Tank 1 and \(1.82 \times 10^{-2}\) with a standard deviation \(1.33 \times 10^{-2}\) for Tank 2. Additionally, 97\% and 98\% of steady-state values fell within 5\% of their target for Tanks 1 and 2, respectively, while 78\% and 70\% fell within 2\%. An example feedback control experiment is shown in \Cref{fig:ref-track-tanks}.

\begin{figure}
    \centering
    \includegraphics[width=1\linewidth]{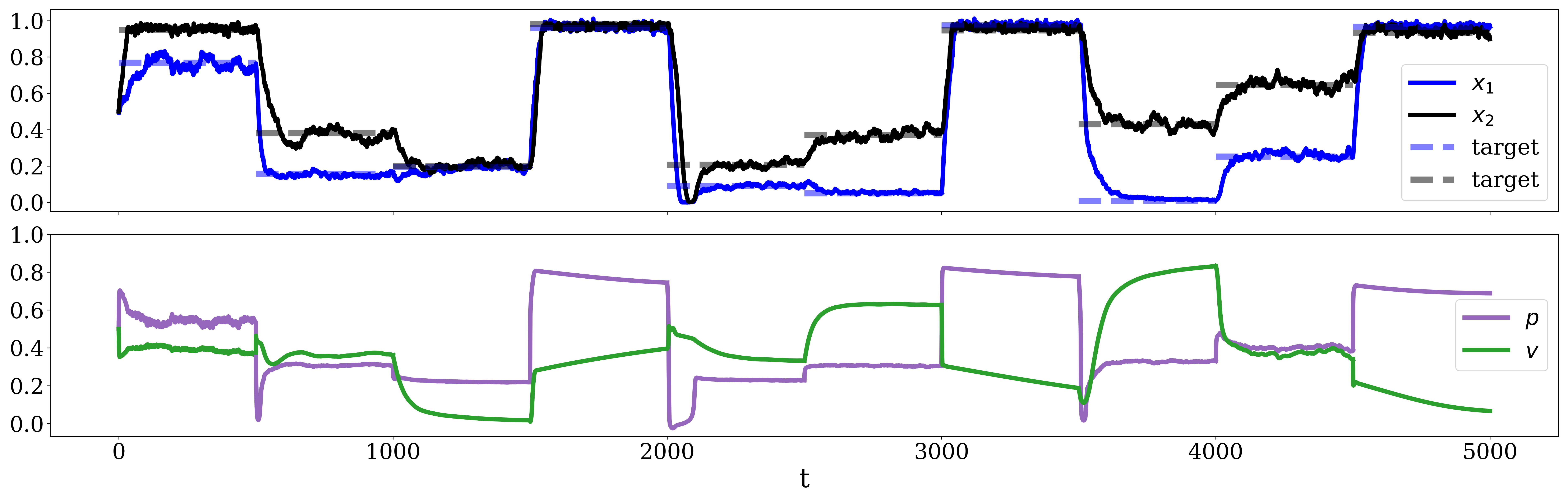}
    \caption{Stochastic feedback control experiments with \(10\) randomized targets reassigned every \(t=500\). \textbf{Top:} Itô simulations with state-proportional noise of scale \(\sigma=0.01\).  \(g_\theta\) for both tanks 1 and 2 plotted, where control iteration and strength are \(k=10\) and \(\eta=0.1\), respectively. \textbf{Bottom:} Constrained continuous feedback control solving \cref{eqn:constrained-control} jointly with the stochastic system where control constraint is given by \cref{eqn:tank-control-constraint}.}
    \label{fig:ref-track-tanks}
\end{figure}

\subsection{Symmetric hysteresis} \label{subsec:sym-hys}

We consider two examples of systems with multiple steady states that exhibit hysteresis, \Cref{subsec:sym-hys,sec:budworm}, a bifurcation characterized by irreversible changes, in the sense they cannot be immediately undone, when control parameters cross critical thresholds, often referred to as tipping points. Our goal is to learn a simple representation of the form in \cref{eqn:fandg} that captures these bifurcations and learns a tractable feedback control policy that can navigate the hysteresis loop.

We consider the ODE
\begin{equation}
    \frac{\mathrm{d}x}{\mathrm{d}t} = \lambda + x - x^3,
    \label{eqn:sym-hyst}
\end{equation}
where \(\lambda\) serves as the control parameter. Bistability occurs for 
\(\lambda \in [-2/\sqrt{27},\, 2/\sqrt{27}]\), which correspond to the system’s tipping points.

This system can be written in a splitting as in \cref{eqn:fandg}, with
\begin{equation}
    \begin{split}
        f(x) &= -x^2, \qquad x \neq 0,\\
        g(x,\lambda) &= \frac{1}{x^2}(x+\lambda).
    \end{split}
    \label{eqn:f-g-decomp}
\end{equation}
Although this decomposition satisfies \(f(x) < 0\) and ensures that 
\(x - g(x,\lambda)\) identifies fixed points, it does not satisfy \cref{eqn:fandg}: 
\(g(x,\lambda)\) exhibits a finite–time blow-up as \(x\) approaches 0, ( which is then exactly canceled by \(f(x) = -x^2\) ). 
However, we find that an approximation of this splitting can be learned with \(g_\theta \approx g\) that captures the asymptotic structure, locates the bistable regime, and reproduces the resulting hysteretic dynamics.

For training, we simulate trajectories using 51 uniformly sampled initial conditions \(x \in [-2,2]\) paired with 51 uniformly sampled control values \(\lambda \in [-1,1]\). Training is restricted to a short time horizon of \(t=0.25\), where full temporal evolution occurs on a much higher order of magnitude, providing only extremely limited transient information.

\subsubsection*{System identification}
Using a trajectory matching loss, \(\mathcal{L}_{\text{Traj}}\) \cref{eqn:train-obj}, we perform a 10-fold cross validation over 200 epochs, with batch size of 50, and an initial learning rate \(\eta = 0.01\). 
To aid learning on the scarce transient dynamics, we featurize the observed state using the orthogonal state transformations,
\begin{equation}
    \vect{x}
    = \bigl(x,\,
    \cos(1^2 \pi (x-a)/(b-a)),\,\ldots,\,
    \cos(4^2 \pi (x-a)/(b-a))\bigr)
    \in \mathbb{R}^5,
    \label{eqn:feat-x}
\end{equation}
with \(a=-1.5\) and \(b=1.5\). These five features form the input to the first five channels of \(g_\theta\), and the sixth input is reserved for the control parameter \(\lambda\). The selected models for \(f_\theta\) and \(g_\theta\) were fully connected 1-20-20-1 and 6-20-20-1 MLPs equipped with SiLU activation functions, and output bounds \((-4,-0.1)\) and \((-2,2)\). The model was selected by a 10-fold cross validation search achieving an average best validation loss of \(6.237 
\times 10^{-8}\). The final model was then tuned on all available data, achieving a best loss of \(7.188\times 10^{-8}\).

The learned model produces accurate simulated trajectories, with an average nRMSE of \(1.210 \times 10^{-3}\) across all training trajectories up to \(t=100\), and reliably identifies regions of the state–control space containing multiple steady states; see \Cref{fig:hyst-traj}.

\begin{figure}
    \centering
    \includegraphics[width=0.9\linewidth]{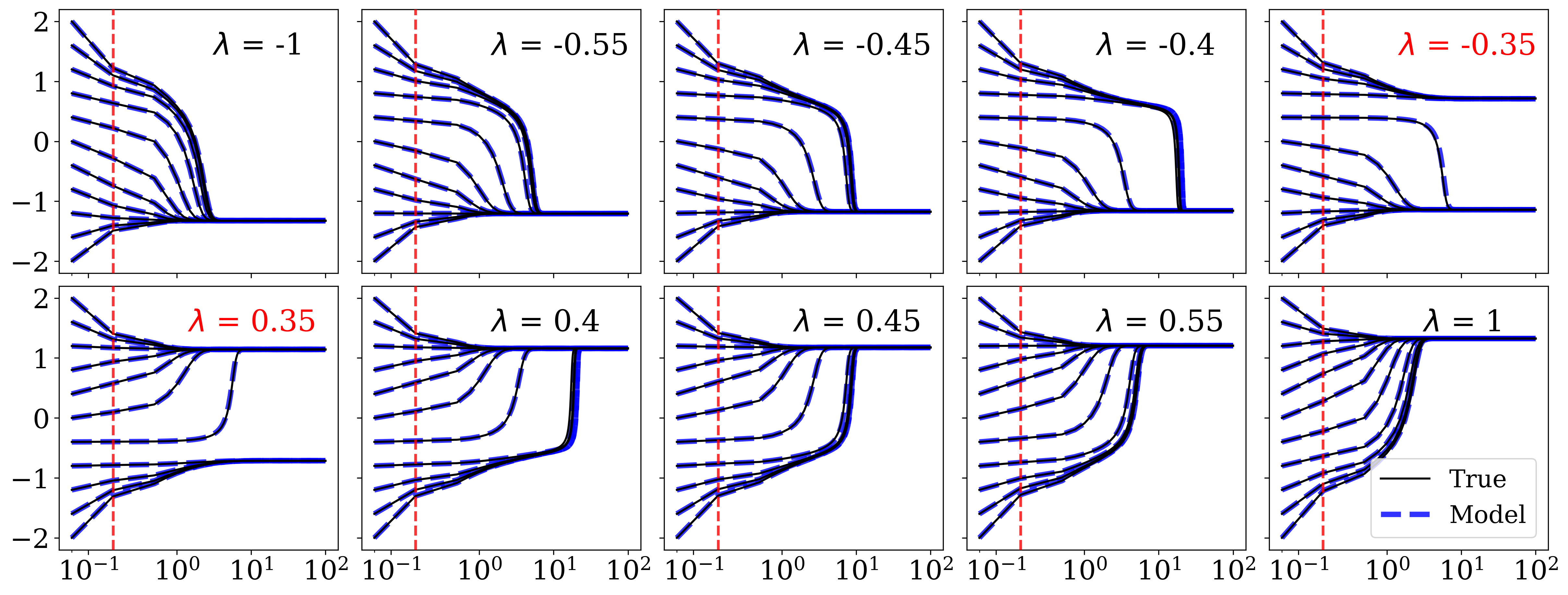}
    \caption{Simulated trajectories (with \(t\) on a logscale) for model \(f_\theta(x)(x-g_\theta(x,\lambda))\) (blue dashed line) trained on data up to \(t=0.25\) (red dashed line) and true dynamics given by \cref{eqn:sym-hyst} (black solid lines) for control values \(\lambda = [-1,-0.55,-0.45,-0.4,-0.35,0.35,0.4,0.45,0.55,1]\) up to \(t=100\) with bistable cases highlighted in red.} 
    \label{fig:hyst-traj}
\end{figure}

The model learns a smooth and bounded \(g_\theta\) and \(f_\theta\) that captures the correct fixed points of the system, which can be seen in \Cref{fig:f-g-hyst}. Additionally, using just \(g_\theta\) we correctly identify the zero-level sets of \(r(x) = x-g_\theta(x,u)\), or in other words, the fixed points of the system in \cref{eqn:sym-hyst}; see \Cref{fig:zero-set-hyst}. This gives rise to a simple loss landscape when solving for control. In fact, this corresponds directly to the bifurcation diagram and is recovered identically in \Cref{fig:hyst-bifur}.

\subsubsection*{Feedback control}

We evaluate the learned \(g_\theta\) as an unconstrained feedback control policy, using \cref{eqn:cont-control-dyn}, for 10 randomized targets uniformly drawn from \([-1.5,1.5]\), over 100 trials, with multiplicative noise added to the states. 

The average nRMSE over the final \(20\%\) of the allotted time, \(t=10\), is \(5.298 \times 10^{-3}\) with standard deviation of \(2.449 \times 10 ^{-3}\). \(100\%\) of all steady states fell within \(2\%\) of their target, with \(94\%\) falling within \(1\%\) of the target.

We highlight the ability of the control policy to identify and counteract tipping point behaviors in an example experiment shown in \Cref{fig:sym-hyst-ref}.

Controlling the dynamics toward an \emph{unstable} steady state is inherently challenging and requires careful tuning of the update strength \(\eta\). We demonstrate the consequences of using an underdamped controller, i.e. \(\eta\) too small, which fails to counteract the tipping behavior of the system in \Cref{fig:eta-effects}.

\begin{figure}
    \centering
    \includegraphics[width=1\linewidth]{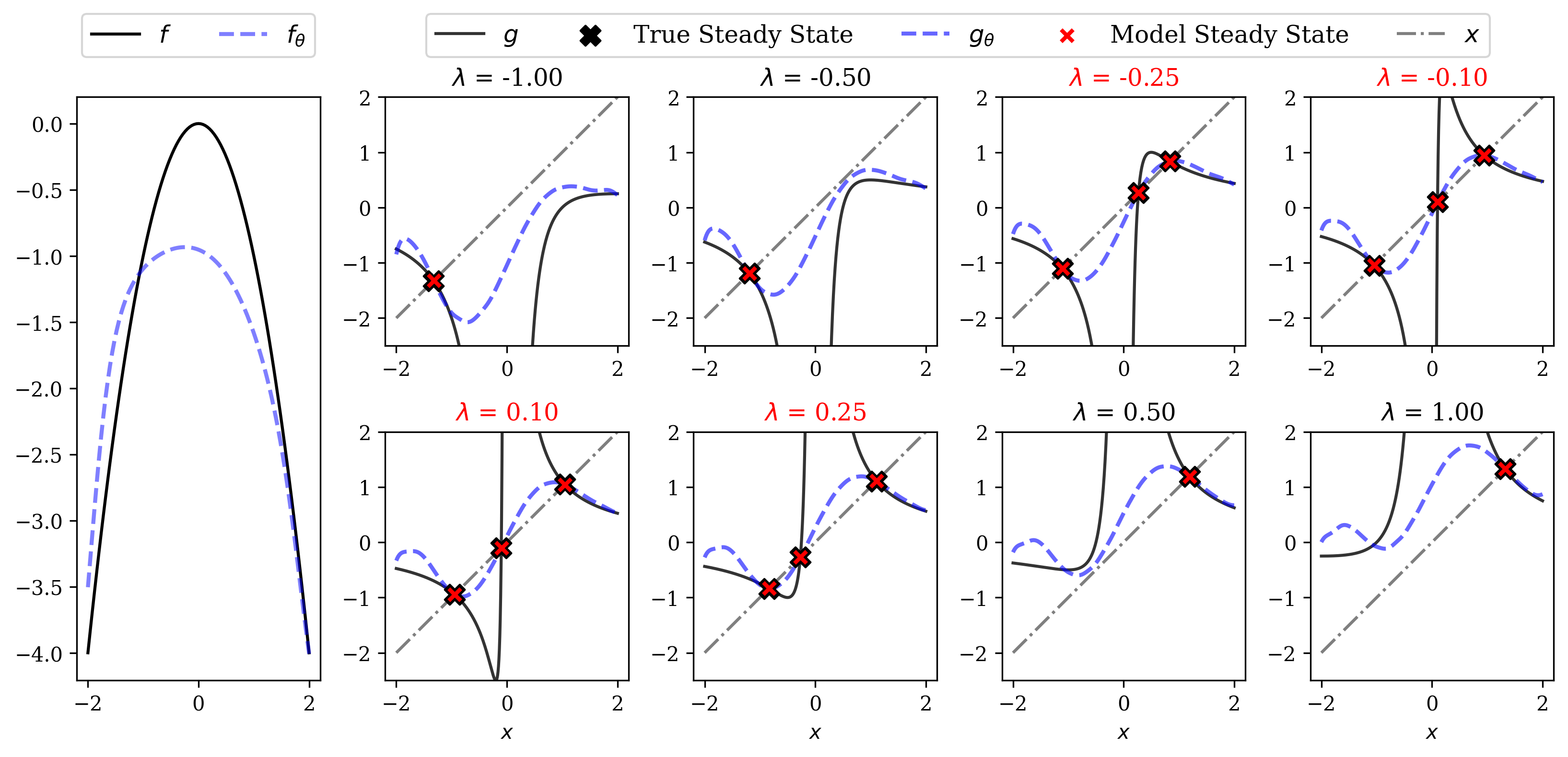}
    \caption{\textbf{Left:} Learned \(f_\theta\) compared against \(f\) from \cref{eqn:f-g-decomp} across training regime \(x\in[-2,2].\) \textbf{Right:} Learned \(g_\theta\) compared against \(g\) in \cref{eqn:f-g-decomp} for various control values \(\lambda\). Red highlighted control indicates bistable regime, \(\lambda \in [-2/\sqrt{27}, 2 / \sqrt{27}]\), that is 3 steady states occur, 2 stable and 1 unstable--denoted by the intersections with the diagonal gray dashed line \(x=x\). }
    \label{fig:f-g-hyst}
\end{figure}

\begin{figure}
    \centering
    \includegraphics[width=1\linewidth]{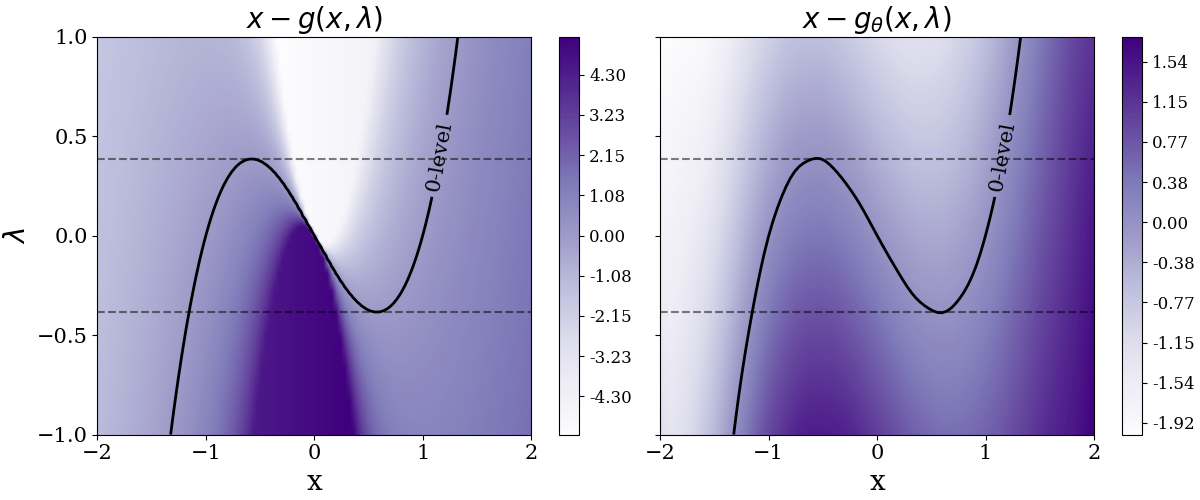}
    \caption{Comparison of the true and learned function outputs across the state and control space. \textbf{Left:} Output of \(x-g(x,\lambda)\) (from \cref{eqn:f-g-decomp}). \textbf{Right:} Output of the learned approximation \(x-g_\theta(x,\lambda)\). In both panels, the zero-level set is denoted by a solid black line, and the black dashed lines indicate the system tipping points at \(\lambda = \pm 2/\sqrt{27}\).}
    \label{fig:zero-set-hyst}
\end{figure}

\begin{figure}
    \centering
    \includegraphics[width=0.7\linewidth]{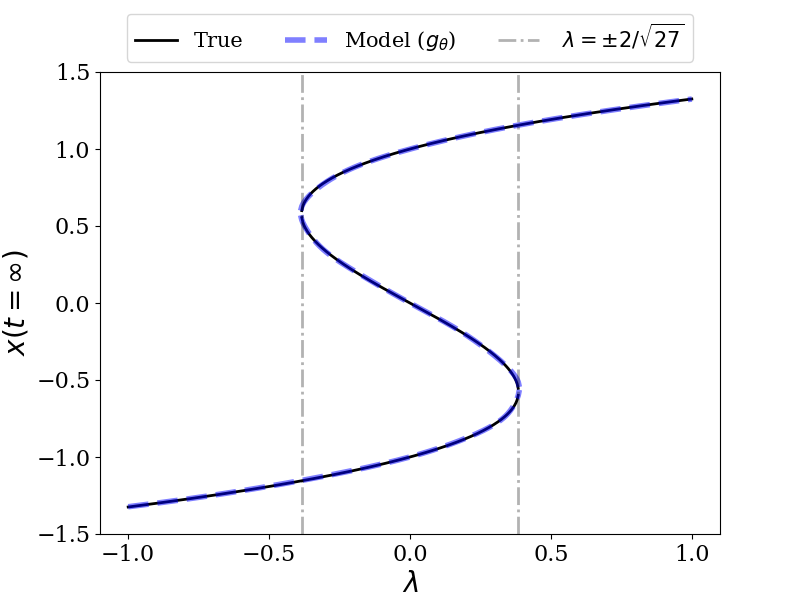}
    \caption{All steady states learned from solving root finding problem \(r(x)=x-g_\theta(x,\lambda)\) for \(\lambda \in [-1,1]\). Gray vertical lines indicate system tipping points.}
    \label{fig:hyst-bifur}
\end{figure}

\begin{figure}
    \centering
    \includegraphics[width=1\linewidth]{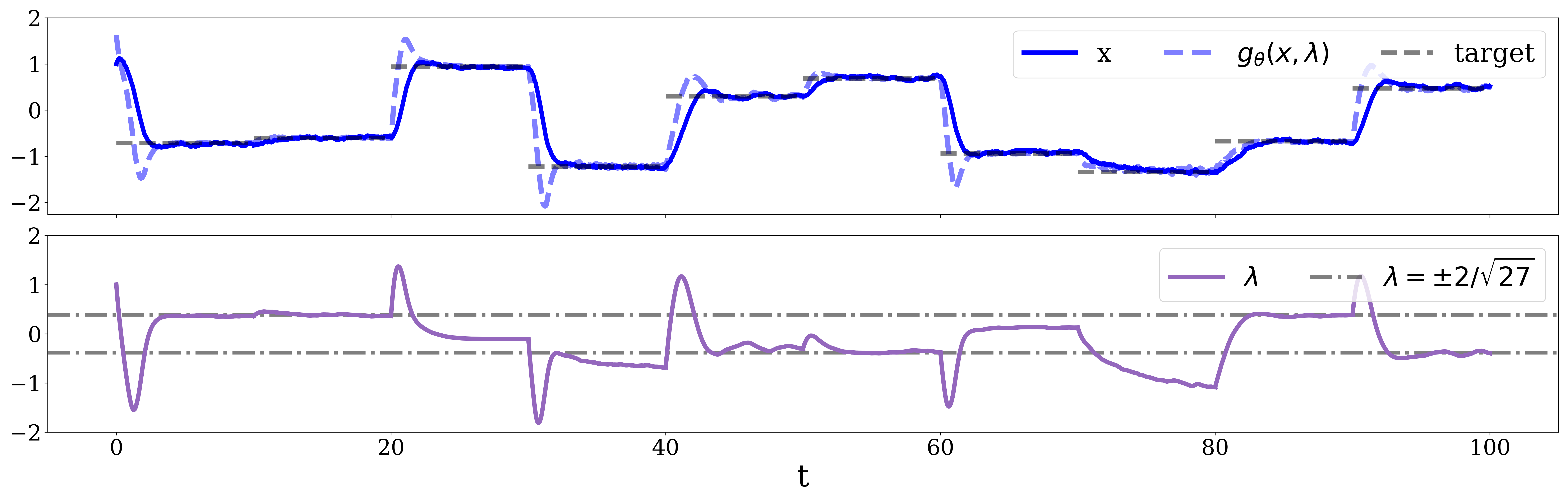}
    \caption{\textbf{Top:} Continuous feedback control for 10 randomized targets uniformly chosen from \([-1.5,1.5]\) where implicit equilibria function \(g_\theta(x,\lambda)\) included to show the behavior around tipping point. The states were perturbed by noise with scale \(\sigma = 0.03\). \textbf{Bottom:} Continuous control with control iteration and strength given by \(k=1\) and \(\eta=5\), as defined in \cref{eqn:cont-control-dyn}.
    Grey dashed horizontal lines denote tipping points of system. }
    \label{fig:sym-hyst-ref}
\end{figure}

\begin{figure}
    \centering
    \includegraphics[width=1\linewidth]{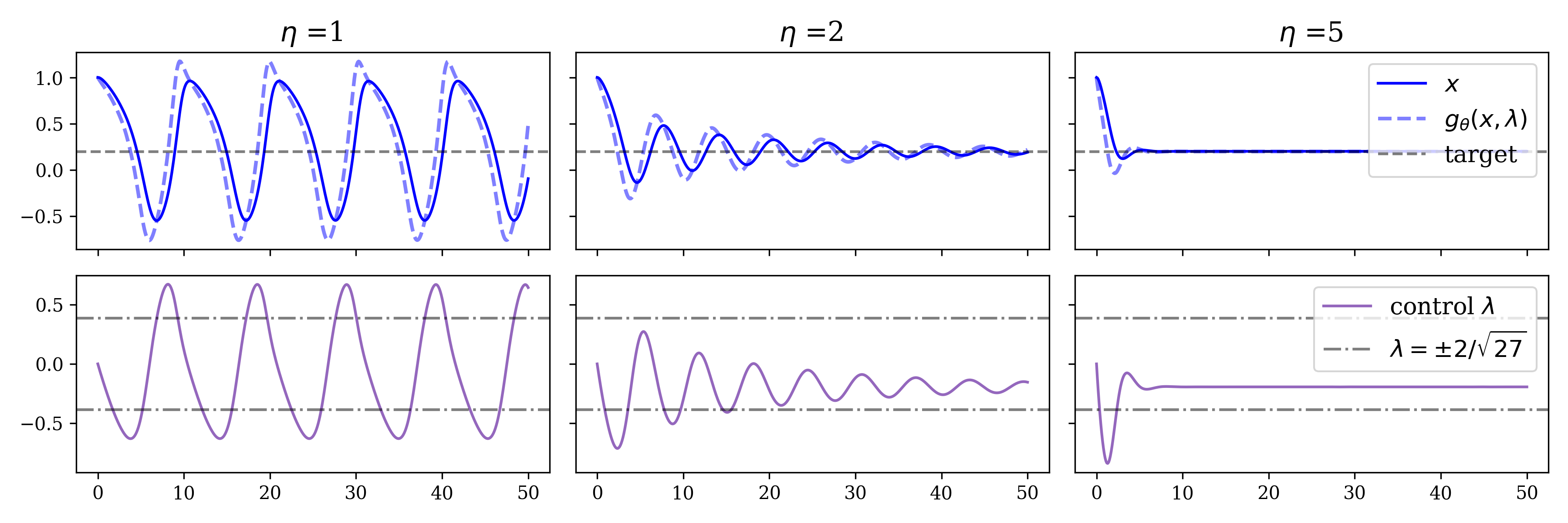}
    \caption{Comparison of effect of \(\eta\) for continuous control from \cref{eqn:cont-control-dyn} for unstable target, \(x = 0.2\), traversing tipping points, \(\lambda = \pm 2/ \sqrt{27}\), of symmetric hysteresis loop. The effect of control strength, \(\eta\), from an underdamped control to sufficiently strong control to mitigate the effects of tipping point (left to right).}
    \label{fig:eta-effects}
\end{figure}

\subsection{Budworm population} \label{sec:budworm}
We analyze the spruce budworm population model, a classic insect outbreak system originally established in \cite{ludwig1978qualitative}. For mathematical convenience, we adopt the dimensionless form presented in \cite{strogatz2014nonlinear}, which reduces the system to two parameters, \(r\) and \(\kappa\):

\begin{equation}
    \frac{\mathrm{d}x}{\mathrm{d}t} = r x\left(1-\frac{x}{\kappa}\right) - \frac{x^2}{1+x^2}.
\end{equation}
By fixing \(r=0.56\) and treating the carrying capacity, \(\kappa\), as the control parameter, the system exhibits tipping points at \(\kappa \approx 6.45\) and \(\kappa \approx 9.93\), with bistability arising for intermediate values.

The additional challenge of this system is that the large number of steady states along the upper branch often causes models to favor learning only a single branch. 

Similar to the example in \Cref{subsec:sym-hys}, this system exhibits hysteresis but presents the additional challenge of asymmetric steady-state branches surrounding the tipping points, as shown in \Cref{fig:budworm-bifur}. This asymmetry can bias the learning process to favor the larger branch while neglecting the smaller one, thereby failing to capture the full hysteretic dynamics. However, we demonstrate the ability to capture the full dynamics on limited transient data.

The system can be written in the decoupled form of \cref{eqn:fandg}:
\begin{equation}
    \begin{split}
        f(x) &= -\frac{x}{1+x^2}, \quad x > 0,\\
        g(x,\kappa) &= \frac{r}{\kappa}(1+x^2)(\kappa-x).
    \end{split}
\end{equation}
where we emphasize this decoupling is not unique and may not be the optimal form for learning a practical approximation.

For training, we simulate trajectories  up to \(t = 10\) from 51 uniformly sampled initial conditions \(x \in [0.1,10]\) and 51 control values \(\kappa \in [4.45, 11.99]\), slightly beyond the tipping point range. 

\subsubsection*{System identification} 
We use a gradient matching approach, namely \(\mathcal{L}_{\text{Grad}}\) from \cref{eqn:train-obj}, and perform a cross validation search over 10 folds for 200 epochs, batch size 50, and  an initial learning rate of \(\eta = 0.1\). The selected architecture was a fully connected 1-20-20-1 and 6-20-20-1 MLP, with bounds \((-4,-0.1)\) and \((-5,2)\) for \(f_\theta\) and \(g_\theta\), respectively. 
 The state features are constructed as in \cref{eqn:feat-x}, for the first five inputs to \(g_\theta\), with \(a=-1\) and \(b=1.5\) and the last input reserved for the control parameter \(\kappa\). The selected architecture achieved an average best validation loss of \(1.285 \times 10^{-2}\). The model was then trained on the full dataset for 500 epochs achieving a best loss of \(1.444 \times 10^{-6}\).

 The resulting zero level sets of \(r(x) = x - g_\theta(x,\kappa)\) are shown in \Cref{fig:budworm-zero-set}, and the identified bifurcation diagram is shown in \Cref{fig:budworm-bifur}.

\begin{figure}
    \centering
    \includegraphics[width=1\linewidth]{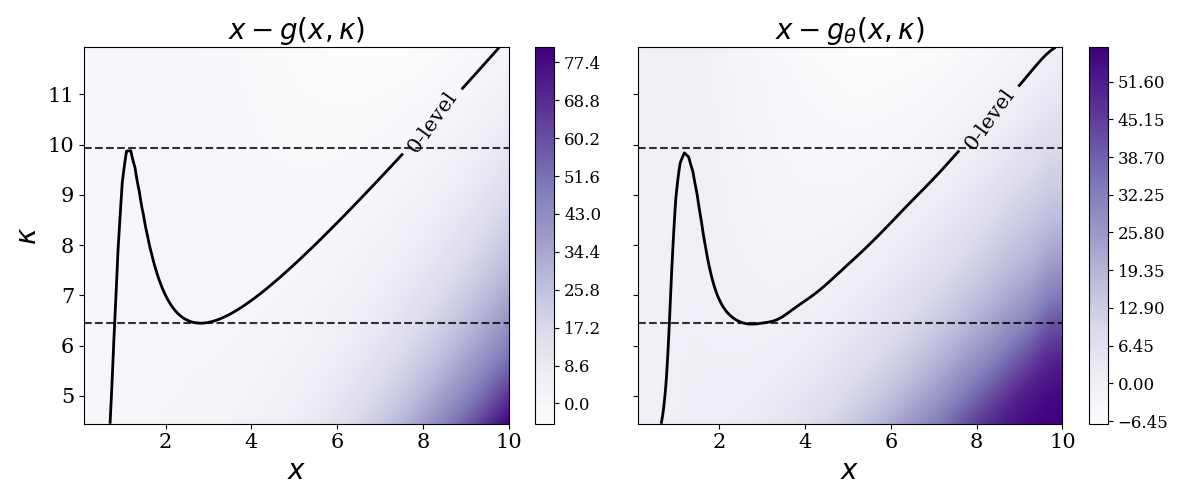}
    \caption{Comparison of the true and learned zero-level sets across \(x \in [0,10]\) and \(\kappa \in [4.44,11.93]\). \textbf{Left:} The true zero-level set (black solid line) of \(x-g(x,\kappa)\). \textbf{Right:} The zero-level set (black solid line) of the learned dynamics \(x-g_\theta(x,\kappa)\). Black dashed lines denote the system tipping points at \(\kappa = 6.446\) and \(9.934\).}
    \label{fig:budworm-zero-set}
\end{figure}

\begin{figure}
    \centering
    \includegraphics[width=0.7\linewidth]{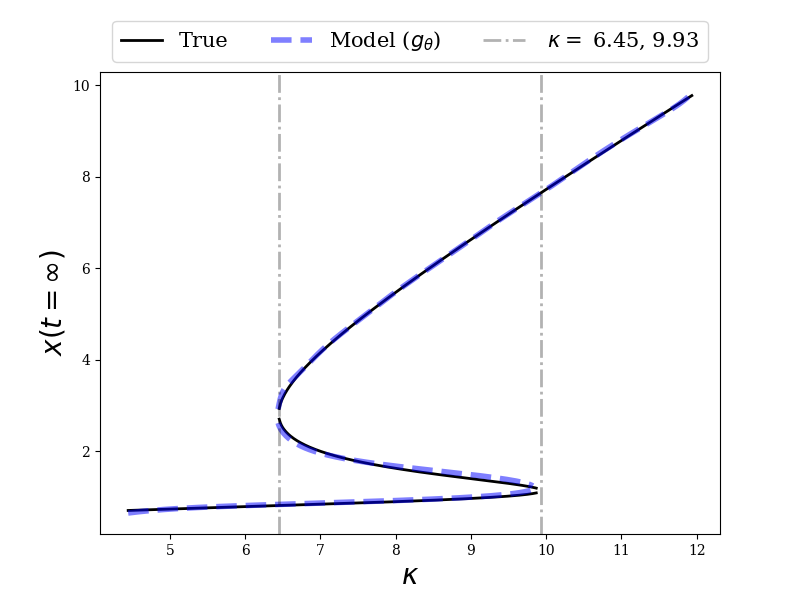}
    \caption{Learned bifurcation diagram from solving associated root finding problem \(r(x)=x- g_\theta(x,\kappa) \) for \(\kappa\in[4.44,11.93]\). Stability of each steady state is determined by evaluating \(\frac{\partial}{\partial x} [f(x)(x-g(x,\kappa))]_{x=x^*}\). Stable states are denoted in blue, unstable in red. Dashed lines denote learned model, black solid lines denote true, numerically simulated results. Gray dashed vertical lines denote system tipping points.}
    \label{fig:budworm-bifur}
\end{figure}

\subsubsection*{Feedback control} 
We validate the feedback control performance by selecting \(10\) randomized targets uniformly from the range \([0.1,10]\) across \(100\) trials. The average nRMSE is \(2.209 \cdot 10^{-2}\) with standard deviation \(3.302 \times 10^{-2}\). \(90\%\) of targets are reached with \(5\%\) relative error of the system magnitude, while \(77\%\) reach the target within \(2\%\). An example feedback control experiment is shown in \Cref{fig:budworm-ref-track}.

\begin{figure}
    \centering
    \includegraphics[width=1\linewidth]{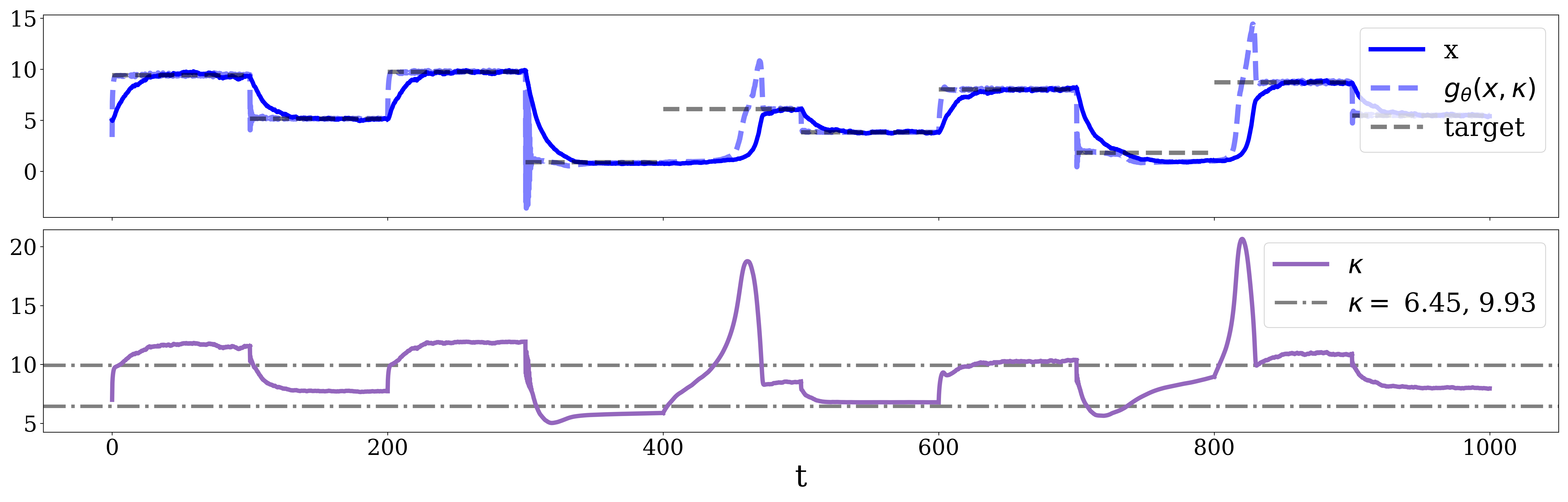}
    \caption{\textbf{Top:} Continuous feedback control for the Budworm ODE given 10 randomized targets perturbed by noise with scale \(\sigma= 0.02\). \textbf{Bottom:} Continuous control with control iteration and strength are given by \(k=1\) and \(\eta=20\), respectively, as defined in \cref{eqn:cont-control-dyn}. Grey dashed horizontal lines denote tipping points of system, \(\kappa = 6.45, 9.93\).}
    \label{fig:budworm-ref-track}
\end{figure}

\subsection{Genetic toggle switch}

We consider the synthetic, bistable gene-regulatory network originally proposed in the work in \cite{gardner_construction_2000} to model \textit{Escherichia coli}. This system is comprised of two promoters that are inhibited by two repressors transcribed by the opposing promoter; an overview can be seen in \Cref{fig:toggle_switch}. The dimensionless system is given by 
\begin{equation}
    \begin{split}
        \frac{\mathrm{d}x_1}{\mathrm{d}t} &= -x_1 + \frac{\alpha_1}{1+x_2^\beta}\\
        \frac{\mathrm{d} x_2}{\mathrm{d}t} &= -x_2 + \frac{\alpha_2}{1+x_1^\gamma}
    \end{split}
    \label{eqn:genetic-toggle}
\end{equation}
where \(x_1\) and \(x_2\) are the concentrations of two active repressor/promoters, \(\alpha_1\) and \(\alpha_2\) are the effective synthesis rates of each repressor, and \(\beta, \gamma\) are the cooperativity (Hill Coefficients) of each repressor, respectively. The model is flexible in modeling various biological toggle switch systems and gives rise to a non-trivial set of bifurcations and hysteresis \cite{ANGELI2004185,li_multistability_2013}. This can be seen explicitly in \Cref{fig:toggle-switch-results} where, as an example, confining control of the system to changing only one parameter, e.g. \(\alpha_1\), and fixing all other parameters, \(\alpha_2, \beta, \gamma\), illustrates coupled hysteretic bifurcation diagrams for the steady state of both \(x_1\) and \(x_2\). Increasing the control dimension immediately exacerbates the challenging nature of this problem from a system id and control perspective.

For model training, variable transient times (time to steady state) were determined via reverse gradient accumulation on trajectories simulated up to \(t=100\). The training set was generated using a full combinatorial design of \(81\) initial conditions (\(9\times 9\) grid over \([0,6]^2\)) and all permutations of the control parameters:
\begin{equation*}
    \alpha_1, \alpha_2, \beta, \gamma \in \{0.1, 1.25, 2.5 , 3.75, 5\}.
\end{equation*}

\subsubsection*{System identification}
Using the gradient matching approach, \(\mathcal{L}_{\text{Grad}}\) in \cref{eqn:train-obj}, we performed a 10-fold cross validation across multiple architectures and random seeds. All models were initialized with the same training setup: training for 500 epochs, batch size of 200, and an initial learning rate of \(\eta = 0.01\).

We observed a disconnect between average validation scores and full-dataset performance. While shallower networks achieved higher average cross validation scores, they consistently plateaued and failed to capture the dynamics when trained on the full dataset. Conversely, deeper MLPs generally outperformed all other models across the majority of individual folds, but their average validation scores were often drastically penalized by typically a single fold that failed to learn. We attribute this single-fold failure to the system's high number of bifurcations; omitting a single fold of training data can easily exclude critical topological features, such as an entire hysteresis lobe, rendering the network unable to generalize to that validation set.

Recognizing this artifact of the data partitioning, and consistent with preliminary investigations across all our experiments that generally favored deeper models, we opted for the deeper MLP architectures: 2-20-20-20-2 for \(f_\theta\) and 6-20-20-20-2 for \(g_\theta\). When trained on the complete dataset, this configuration far outperformed all other candidates, achieving a best loss of \(7.570\times 10^{-4}\).

From an analytical perspective we note the system given by \cref{eqn:genetic-toggle} can be written cleanly in the form  

\begin{equation*}
    f(x_1, x_2) = (-1,-1)^\top \quad \text{and} \quad g(x_1, x_2, \alpha_1, \alpha_2, \beta,\gamma) = \left(\frac{\alpha_1}{1+x_2^\beta}, \frac{\alpha_2}{1+x_1^\gamma}\right)^\top
\end{equation*}

where \(g\) is sigmoid-like and univariate across both dimensions. When learning a candidate \(f_\theta\) and \(g_\theta\) solely from trajectory observations with no additional information a priori, other than the proposed learning structure, it would be ideal to learn a  constant \(f=-1\), and a \(g\) that is dominantly univariate. In fact, this is recovered nearly identically; shown in \Cref{fig:f-and-g-toggle-switch}. The learned structure correctly captures the dynamics and multistable regimes of the system as shown in \Cref{fig:toggle-switch-results}, where we can identify hysteretic bifurcations across various control values using only \(g_\theta\), as well as simulated non-trivial trajectories as seen in \Cref{fig:toggle-traj}.

\subsubsection*{Feedback control}
Employing \(g_\theta\) as a feedback controller in this context is notably more challenging due to the increased number of control parameters and the coupled hysteretic dynamics. We utilize a constrained control policy, \cref{eqn:constrained-control}, to enforce positivity and restrict the cooperativity coefficients to be strictly greater than 1. This constraint serves two purposes:

\begin{enumerate}
    \item It encourages the system to operate within the more dynamically interesting multistable regions.
    \item It demonstrates the ability to steer dynamics toward targets generated by control values distinct from those available to the controller (i.e., targets generated by parameters the controller is forbidden from using). This mimics real-world experimental constraints and highlights the robustness of the approach.
\end{enumerate}

Specifically, for controls $\vect{u} = (\alpha_1, \alpha_2, \beta, \gamma)$, we define the element-wise control constraint $\phi(u_i)$ as:
\begin{equation}
    \phi(u_i) = 
    \begin{cases} 
        H(u_i - \delta_1, \ell), & u_i \in \{\alpha_1, \alpha_2\} \\
        H(u_i - \delta_2, \ell), &  u_i \in \{\beta, \gamma\}
    \end{cases}
    \label{eqn:toggle-control-constraint}
\end{equation}
where $\delta_1 = 0.1$, $\delta_2 = 1.1$, and the decay parameter $\ell = 200$ is applied to our feedback policy as in \cref{eqn:constrained-control}.

We visualize the loss landscape of \(g_\theta\) in \Cref{fig:toggle-loss-landscape}, demonstrating the model correctly captures both the location and number of steady states. We also present constrained control experiments over 100 trials, steering the restricted dynamics toward 10 randomized targets. The targets were generated using initial conditions and control values drawn uniformly from \([0,6]^2\) and \([0,5]^4\), respectively.

Despite the complex coupled hysteretic dynamics and the constrained control policy, the learned feedback controller successfully steered the system to the designated targets with high accuracy. Specifically, the average nRMSE, over all 100 trials, was \(5.746 \times 10^{-2}\) (with a standard deviation of \(4.770 \times 10^{-2}\)) for \(x_1\), and \(5.275 \times 10^{-2}\) (with a standard deviation of \(4.379 \times 10^{-2}\)) for \(x_2\). Demonstrating the robustness of the controller, 99\% of the achieved steady states fell within a 5\% error margin relative to the system magnitude, with 85\% and 91\% of the trials for \(x_1\) and \(x_2\), respectively, achieving sub-2\% error. An example feedback control experiment is shown in \Cref{fig:feedback-toggle}.

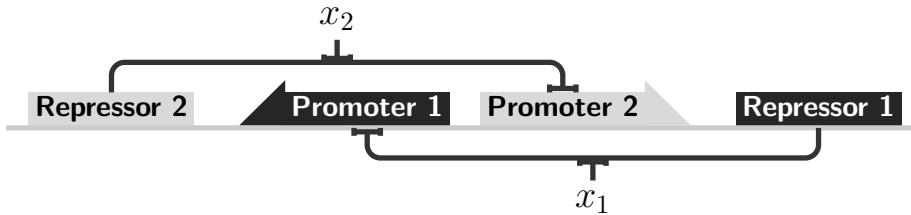
\begin{figure}[t]
    \centering
    \begin{tikzpicture}[
        node distance=0pt, 
        font=\sffamily\bfseries,
        gene/.style={
            draw=none,
            rectangle,
            minimum height=0.35cm,
            text centered,
            outer sep=0pt,
            inner sep=2pt,
            text height=1.5ex,
            text depth=0.5ex
        },
        repressor/.style={
            gene,
            minimum width=2.2cm,
            text=black
        },
        promoter/.style={
            gene,
            minimum width=2.2cm
        },
        inhibitor line/.style={
            draw=black!80,
            line width=2pt,
            rounded corners=5pt,
            -{Bar[width=12pt, length=3pt]}
        },
        inducer label/.style={
            font=\Large,
            text=black
        }
    ]

    \node[repressor, fill=gray!30] (rep2) {Repressor 2};
    \node[promoter, fill=black!85, text=white, right=1.2cm of rep2] (prom1) {Promoter 1};
    \node[promoter, fill=gray!30, right=0.4cm of prom1] (prom2) {Promoter 2};
    \node[repressor, fill=black!85, text=white, right=1.2cm of prom2] (rep1) {Repressor 1};

    \draw[line width=2pt, gray!40] 
        ($(rep2.south west)+(-0.3,0.0)$) -- ($(rep1.south east)+(0.3,0.0)$);

    \fill[black!85] 
        ($(prom1.west)+(0.01, 0.4)$) --      
        ($(prom1.west)+(0.01, -0.201)$) --     
        ($(prom1.west)+(-0.6, -0.201)$) -- 
        cycle;

    \fill[gray!30] 
        ($(prom2.east)+(-0.01, 0.4)$) --      
        ($(prom2.east)+(-0.01, -0.201)$) --     
        ($(prom2.east)+(0.60, -0.201)$) --  
        cycle;

    \def\voffset{0.4cm}
    \def\labeloffset{0.3cm}

    \draw[inhibitor line] (rep1.south) 
        -- ++(0,-\voffset) coordinate (bot_corner1) 
        -| (prom1.south);

    \draw[inhibitor line] (rep2.north) 
        -- ++(0,\voffset) coordinate (top_corner1) 
        -| (prom2.north);

    \path (bot_corner1) -- (bot_corner1 -| prom1.south) coordinate[midway] (mid_bot);
    \node[inducer label, below=\labeloffset of mid_bot] (ind1) {$x_1$};
    \draw[inhibitor line] (ind1.north) -- (mid_bot);

    \path (top_corner1) -- (top_corner1 -| prom2.north) coordinate[midway] (mid_top);
    \node[inducer label, above=\labeloffset of mid_top] (ind2) {$x_2$};
    \draw[inhibitor line] (ind2.south) -- (mid_top);

    \end{tikzpicture}

    \caption{Toggle switch design. Repressor 1 inhibits transcription from Promoter 1 and is induced by \(x_1\). Repressor 2 inhibits transcription from Promoter 2 and is induced by \(x_2\).}
    \label{fig:toggle_switch}
\end{figure}

\begin{figure}
    \centering
    \includegraphics[width=1\linewidth]{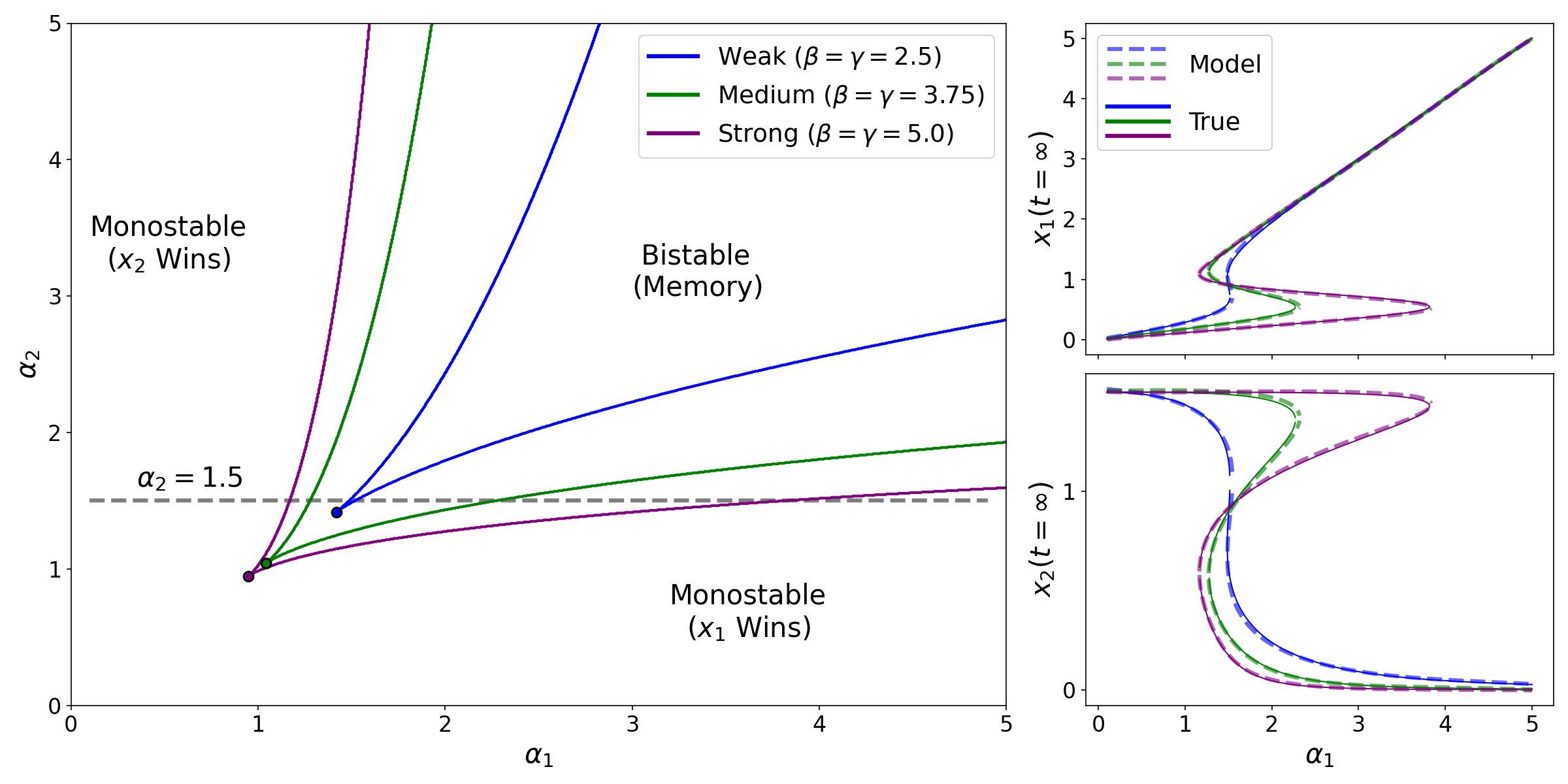}
    \caption{Comparison of true and modeled bistability landscapes across varying cooperativity levels. \textbf{Left:} Overlay of bistability regions (``Arnold tongues'') in the $(\alpha_1, \alpha_2)$ parameter space for symmetric cooperativity cases ($\beta=\gamma$). The wedge shapes delineate the bistable (memory) regime from the monostable regimes. The dashed horizontal line at $\alpha_2=1.5$ indicates the cross-section used for the bifurcation analysis on the right. \textbf{Right:} Steady-state response of $x_1$ (top) and $x_2$ (bottom) as a function of $\alpha_1$, with $\alpha_2$ fixed at $1.5$. Solid lines represent the true system dynamics, while dashed lines represent the learned model predictions using \(g_\theta\). The colors correspond to the cooperativity levels defined in the left panel.}
    \label{fig:toggle-switch-results}
\end{figure}

\begin{figure}
    \centering
    \includegraphics[width=1\linewidth]{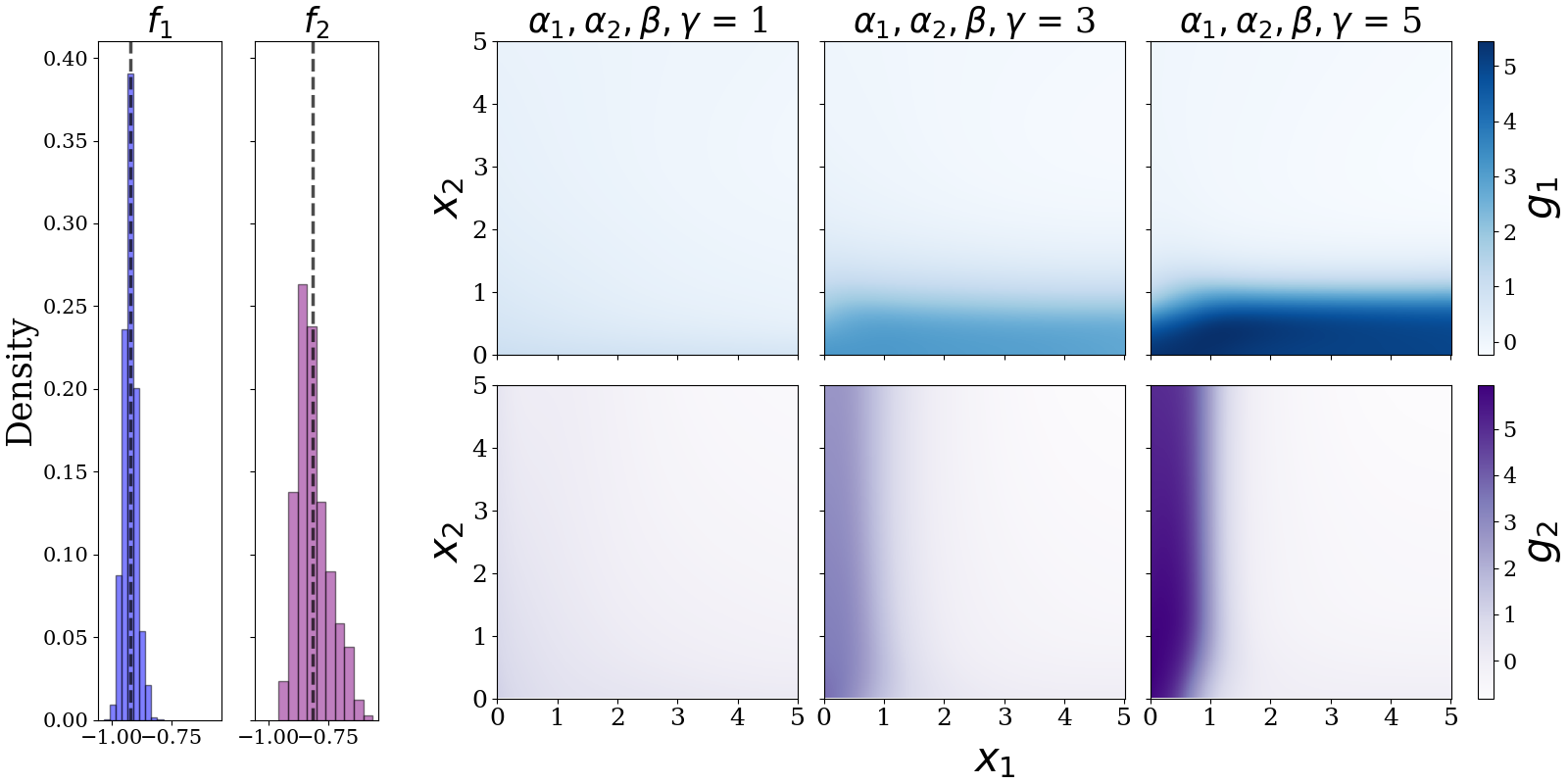}
    \caption{\textbf{Left:} Histograms of the learned function \(f_\theta\) evaluated over a dense grid where \(x_1, x_2 \in [0,5]\). Black vertical dashed lines indicate mean values of \(-0.9224\) and \(-0.8152\), with standard deviations \(2.720\cdot 10^{-2}\) and \(6.807 \cdot 10^{-2}\), respectively. \textbf{Right:} Output of the learned control function \(g_\theta\) on the same domain for increasing control parameters \(\alpha_1 = \alpha_2 = \beta = \gamma \in \{1, 3, 5\}\). The top and bottom rows display the first and second output dimensions of \(g_\theta\), respectively. Notably, the learned dynamics exhibit univariate behavior.}
    \label{fig:f-and-g-toggle-switch}
\end{figure}

\begin{figure}
    \centering
    \includegraphics[width=0.9\linewidth]{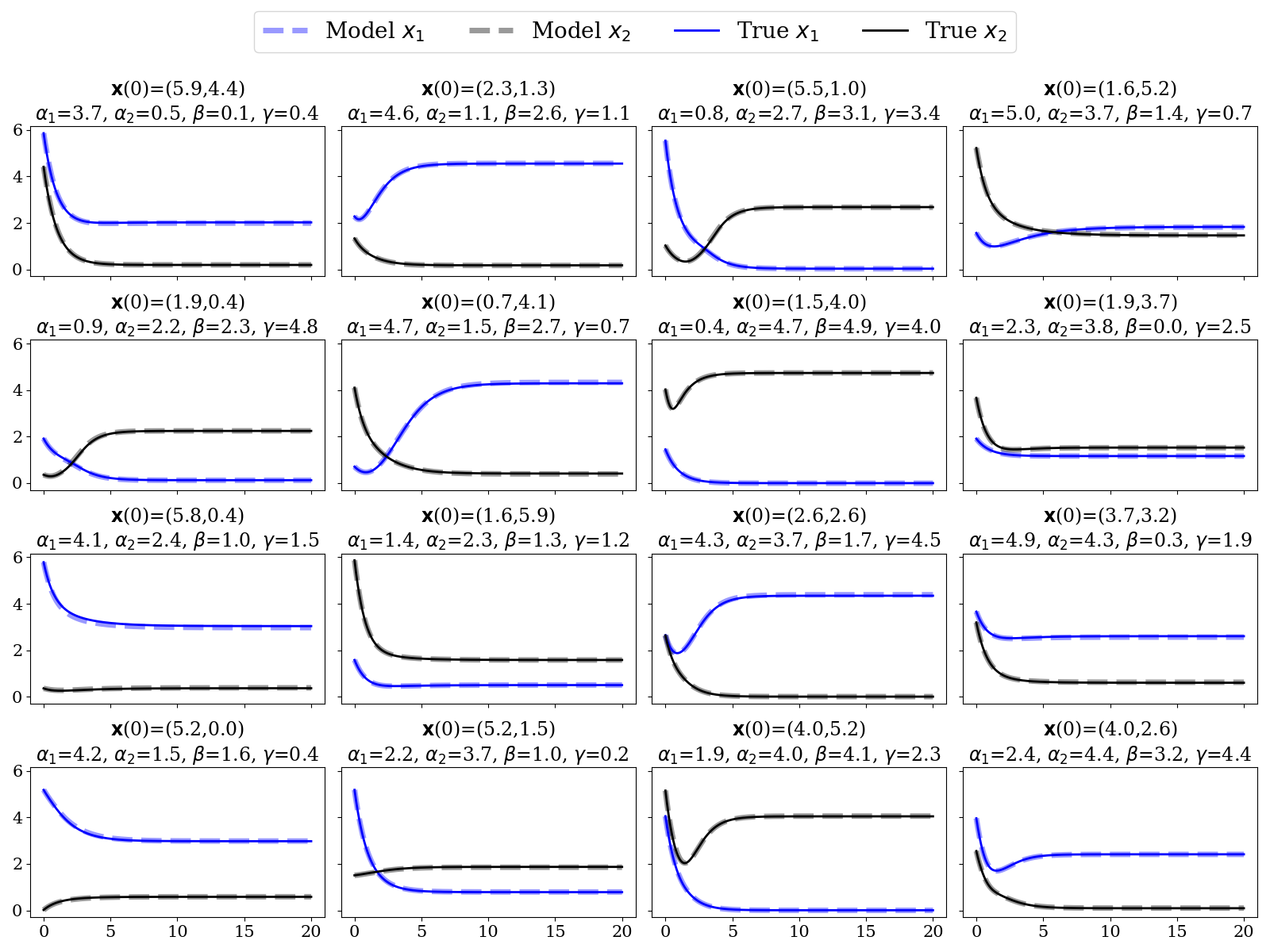}
    \caption{Comparison of true versus learned simulated trajectories: using full \(f_\theta(\vect{x})(x-g_\theta(\vect{x},\vect{u}))\). Initial conditions are sampled uniformly from \(\vect{x}(0) \in [0,6]^2\), while control parameters \((\alpha_1, \alpha_2, \beta, \gamma)\) are drawn uniformly from \([0,5]^4\).}
    \label{fig:toggle-traj}
\end{figure}

\begin{figure}
    \centering
    \includegraphics[width=1\linewidth]{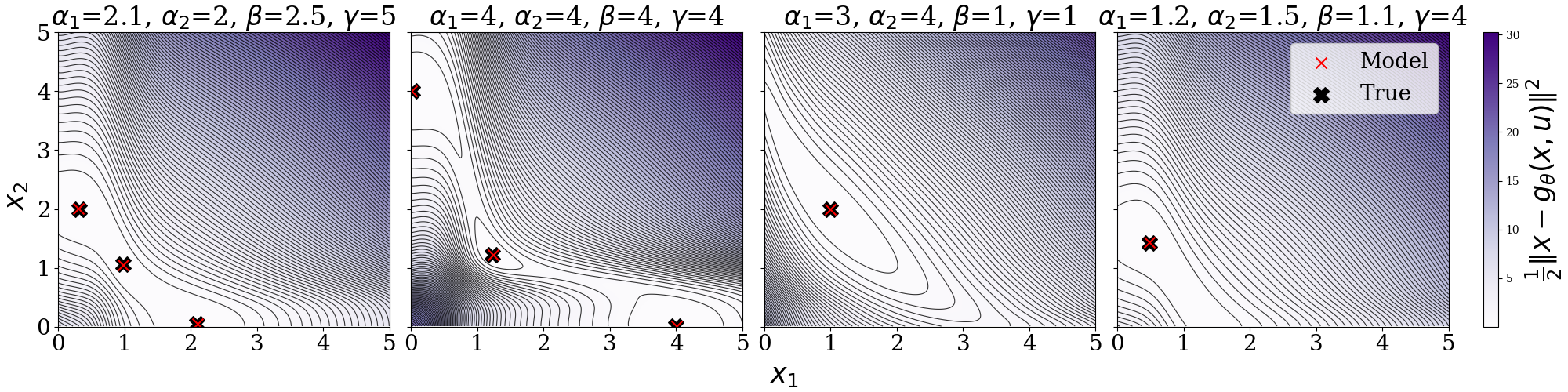}
    \caption{Feedback control loss landscape for various control configurations. Black crosses denote true minima found and red crosses denote minima found using \(g_\theta\) via solving corresponding objective: \(\frac{1}{2}\|\vect{x} - g_\theta(\vect{x},\vect{u})\|^2\). A coarse grid search and subsequent root finding scheme (\texttt{scipy.optimize.root}) was used to determine solutions for both the true and learned model.}
    \label{fig:toggle-loss-landscape}
\end{figure}

\begin{figure}
    \centering
    \includegraphics[width=1\linewidth]{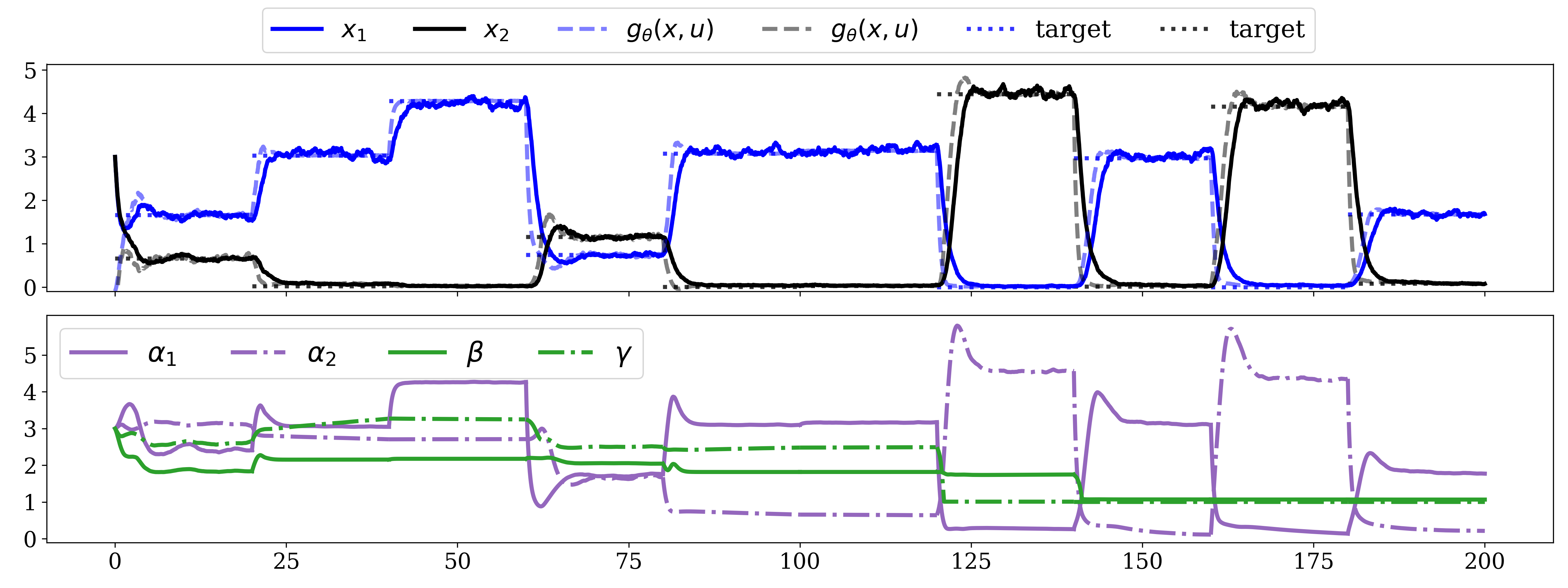}
    \caption{\textbf{Top:} Control restricted feedback control experiment for genetic toggle switch system \cref{eqn:genetic-toggle} for 10 randomized targets every \(t=20\)  perturbed by noise with scale \(\sigma = 0.05\). \textbf{Bottom:} Continuous control with control iteration and strength are given by \(k=1\) and \(\eta = 1\) as defined in \cref{eqn:constrained-control} with constraints given by \cref{eqn:toggle-control-constraint}.}
    \label{fig:feedback-toggle}
\end{figure}

\section{Discussion and future work}
While the proposed stability-constrained framework demonstrates robust performance on learning multistable systems and feedback control over several bifurcations, there remain practical challenges and theoretical extensions. We highlight key limitations of the current approach and outline promising directions for ongoing research. 

\subsection*{Density of state-space observations and trajectory frequency}
The effects of data scarcity and the density of control observations require further investigation. While an exhaustive study was outside the scope of this work, our preliminary results suggest that a dense set of trajectory observations covering the control space is necessary for the current implementation. We leave a detailed analysis of data efficiency to future work.

\subsection*{Noisy dynamics and future architectures}
Although not explicitly investigated here, standard denoising techniques could be applied prior to model learning, integrated into the cross validation and model selection pipeline.

Furthermore, in scenarios involving data scarcity or significant noise---in either state trajectories or control inputs---we are particularly interested in an \emph{All-at-once} approach, similar to the method proposed in \cite{hsu2026jointoptimizationapproachidentifying}, where system identification and state estimation are solved jointly. Adapting this framework by substituting the system identification component with our proposed architecture (augmented with a state interpolator) remains a promising direction for future research.

Finally, we are interested in exploring the learning and rigorous comparison of optimal control policies. While we have demonstrated the potential effectiveness of our proposed strategy for learning control policies, we defer a more rigorous theoretical and empirical investigation to future work.

\section{Conclusion}

We introduced a stability-constrained NODE framework for learning dynamics of nonlinear systems that are asymptotically stable from trajectory data. By enforcing the structured form
\[
F(\vect{x},\vect{u}) = f(\vect{x})\,(\vect{x} - g(\vect{x},\vect{u})), \qquad f(\vect{x}) < 0,
\]
the model guarantees trajectory stability, offers additional interpretability, and yields a tractable parameterization for multi-attractor and hysteretic dynamics from short time-horizon data.

This structure enables accurate reconstruction of equilibrium landscapes, bifurcations, and hysteresis loops, even in regimes where long trajectories or rich excitation are unavailable. The learned equilibrium map \(g_\theta\) further provides an efficient and robust control mechanism: gradient-based feedback control policies directly optimize steady-state objectives and reliably traverse tipping points. Across mixing-tank and hysteretic benchmarks, the method achieves accurate dynamics learning and consistent closed-loop control, including steering to both stable and unstable equilibria under noise perturbations.

\section*{Acknowledgments}
This research was supported by the AT SCALE initiative via the Laboratory Directed Research and Development (LDRD) investments at Pacific Northwest National Laboratory (PNNL). PNNL is a national laboratory operated for the U.S. Department of Energy (DOE) by Battelle Memorial Institute under Contract No. DE-AC05-76RL0-1830.

\bibliographystyle{unsrtnat}   
\bibliography{refs}

@inproceedings{chen2018neural,
  author = {Chen, Tian Qi and Rubanova, Yulia and Bettencourt, Jesse and Duvenaud, David},
  booktitle = {NeurIPS},
  editor = {Bengio, Samy and Wallach, Hanna M. and Larochelle, Hugo and Grauman, Kristen and Cesa-Bianchi, Nicolò and Garnett, Roman},
  ee = {http://papers.nips.cc/paper/7892-neural-ordinary-differential-equations},
  pages = {6572-6583},
  title = {Neural Ordinary Differential Equations.},
  url = {http://dblp.uni-trier.de/db/conf/nips/nips2018.html#ChenRBD18},
  year = 2018
}

@misc{okamoto2025learningsimplestneuralode,
      title={Learning the Simplest Neural ODE}, 
      author={Yuji Okamoto and Tomoya Takeuchi and Yusuke Sakemi},
      year={2025},
      eprint={2505.02019},
      archivePrefix={arXiv},
      primaryClass={stat.ML},
      url={https://arxiv.org/abs/2505.02019}, 
}

@article{aastrom1971system,
  title={System identification—a survey},
  author={{\AA}str{\"o}m, Karl Johan and Eykhoff, Peter},
  journal={Automatica},
  volume={7},
  number={2},
  pages={123--162},
  year={1971},
  publisher={Elsevier}
}

@article{ljung2010perspectives,
  title={Perspectives on system identification},
  author={Ljung, Lennart},
  journal={Annual Reviews in Control},
  volume={34},
  number={1},
  pages={1--12},
  year={2010},
  publisher={Elsevier}
}

@book{keesman2011system,
  title={System identification: an introduction},
  author={Keesman, Karel J},
  year={2011},
  publisher={Springer Science \& Business Media}
}

@article{bongard2007automated,
author = {Bongard, Josh and Lipson, Hod},
year = {2007},
month = {07},
pages = {9943-8},
title = {Automated reverse engineering of nonlinear dynamical systems},
volume = {104},
journal = {Proceedings of the National Academy of Sciences of the United States of America}
}

@article{schmidt2009distilling,
author = {Schmidt, Michael and Lipson, Hod},
year = {2009},
month = {05},
pages = {81-5},
title = {Distilling Free-Form Natural Laws from Experimental Data},
volume = {324},
journal = {Science (New York, N.Y.)}
}

@article{schaeffer2017learning,
  title={Learning partial differential equations via data discovery and sparse optimization},
  author={Schaeffer, Hayden},
  journal={Proceedings of the Royal Society A: Mathematical, Physical and Engineering Sciences},
  volume={473},
  number={2197},
  pages={20160446},
  year={2017},
  publisher={The Royal Society Publishing}
}

@article{rudy2017data,
  title={Data-driven discovery of partial differential equations},
  author={Rudy, Samuel H and Brunton, Steven L and Proctor, Joshua L and Kutz, J Nathan},
  journal={Science Advances},
  volume={3},
  number={4},
  pages={e1602614},
  year={2017},
  publisher={American Association for the Advancement of Science},
  doi={10.1126/sciadv.1602614},
  url={https://www.science.org/doi/abs/10.1126/sciadv.1602614}
}

@article{wang2011,
  title = {Predicting Catastrophes in Nonlinear Dynamical Systems by Compressive Sensing},
  author = {Wang, Wen-Xu and Yang, Rui and Lai, Ying-Cheng and Kovanis, Vassilios and Grebogi, Celso},
  journal = {Phys. Rev. Lett.},
  volume = {106},
  issue = {15},
  pages = {154101},
  numpages = {4},
  year = {2011},
  month = {4},
  publisher = {American Physical Society},
  doi = {10.1103/PhysRevLett.106.154101},
  url = {https://link.aps.org/doi/10.1103/PhysRevLett.106.154101}
}

@article{Brunton2016,
    author = {Brunton, Steven L and Proctor, Joshua L and {Nathan Kutz}, J},
    doi = {10.1073/pnas.1517384113},
    journal = {Proceedings of the National Academy of Sciences},
    number = {15},
    title = {{Discovering governing equations from data by sparse identification of nonlinear dynamical systems}},
    volume = {113},
    year = {2016}
}

@article{donoho2006compressed,
  title        = {Compressed sensing},
  author       = {Donoho, David L.},
  journal      = {IEEE Transactions on Information Theory},
  volume       = {52},
  number       = {4},
  pages        = {1289--1306},
  year         = {2006},
  doi          = {10.1109/TIT.2006.871582}
}

@article{candes2006robust,
  title        = {Robust uncertainty principles: exact signal reconstruction from highly incomplete frequency information},
  author       = {Cand{\`e}s, Emmanuel J. and Romberg, Justin and Tao, Terence},
  journal      = {IEEE Transactions on Information Theory},
  volume       = {52},
  number       = {2},
  pages        = {489--509},
  year         = {2006},
  doi          = {10.1109/TIT.2005.862083}
}

@article{candes2008intro,
  title        = {An introduction to compressive sampling},
  author       = {Cand{\`e}s, Emmanuel J. and Wakin, Michael B.},
  journal      = {IEEE Signal Processing Magazine},
  volume       = {25},
  number       = {2},
  pages        = {21--30},
  year         = {2008},
  doi          = {10.1109/MSP.2007.914731}
}

@article{PILLONETTO2025111907,
	author = {Gianluigi Pillonetto and Aleksandr Aravkin and Daniel Gedon and Lennart Ljung and Ant{\^o}nio H. Ribeiro and Thomas B. Sch{\"o}n},
	doi = {https://doi.org/10.1016/j.automatica.2024.111907},
	issn = {0005-1098},
	journal = {Automatica},
	pages = {111907},
	title = {Deep networks for system identification: A survey},
	url = {https://www.sciencedirect.com/science/article/pii/S0005109824004011},
	volume = {171},
	year = {2025}}

@misc{shumaylov2025discoverabledatadiscoveryrequires,
      title={When is a System Discoverable from Data? Discovery Requires Chaos}, 
      author={Zakhar Shumaylov and Peter Zaika and Philipp Scholl and Gitta Kutyniok and Lior Horesh and Carola-Bibiane Schönlieb},
      year={2025},
      eprint={2511.08860},
      archivePrefix={arXiv},
      primaryClass={math.DS},
      url={https://arxiv.org/abs/2511.08860}, 
}

@article{LJUNG20201175,
    author = {Lennart Ljung and Carl Andersson and Koen Tiels and Thomas B. Sch{\"o}n},
    title = {Deep Learning and System Identification},
    journal = {IFAC-PapersOnLine},
    volume = {53},
    number = {2},
    pages = {1175--1181},
    year = {2020},
    note = {21st IFAC World Congress},
    issn = {2405-8963},
    doi = {10.1016/j.ifacol.2020.12.1329}
}

@misc{koch2025contactdatadrivenfrictionstirprocess,
      title={First Contact: Data-driven Friction-Stir Process Control}, 
      author={James Koch and Ethan King and WoongJo Choi and Megan Ebers and David Garcia and Ken Ross and Keerti Kappagantula},
      year={2025},
      eprint={2507.03177},
      archivePrefix={arXiv},
      primaryClass={eess.SY},
      url={https://arxiv.org/abs/2507.03177}, 
}

@article{Koch2025Neural,
  author    = {James Koch and WoongJo Choi and Ethan King and David Garcia and Hrishikesh Das and Tianhao Wang and Ken Ross and Keerti Kappagantula},
  title     = {Neural lumped parameter differential equations with application in friction-stir processing},
  journal   = {Journal of Intelligent Manufacturing},
  year      = {2025},
  volume    = {36},
  number    = {2},
  pages     = {1111--1121},
  doi       = {10.1007/s10845-023-02271-5},
  url       = {https://doi.org/10.1007/s10845-023-02271-5}
}

@inproceedings{Kolter2019LearningSD,
  title={Learning Stable Deep Dynamics Models},
  author={J. Zico Kolter and Gaurav Manek},
  booktitle={Advances in Neural Information Processing Systems},
  volume={32},
  pages={11128--11136},
  year={2019},
  url={https://proceedings.neurips.cc/paper/2019/file/0a4bbceda17a6253386bc9eb45240e25-Paper.pdf}
}

@inproceedings{wu2023neural,
  title={Neural Lyapunov Control for Discrete-Time Systems},
  author={Wu, Jialin and Clark, Andrew and Kantaros, Yiannis and Vorobeychik, Yevgeniy},
  booktitle={Advances in Neural Information Processing Systems},
  volume={36},
  pages={66612--66623},
  year={2023},
  url={https://arxiv.org/abs/2305.06547}
}

@inproceedings{kang2021stableneuralodelyapunovstable,
author = {Kang, Qiyu and Song, Yang and Ding, Qinxu and Tay, Wee Peng},
title = {Stable neural ODE with lyapunov-stable equilibrium points for defending against adversarial attacks},
year = {2021},
isbn = {9781713845393},
publisher = {Curran Associates Inc.},
address = {Red Hook, NY, USA},
booktitle = {Proceedings of the 35th International Conference on Neural Information Processing Systems},
articleno = {1144},
numpages = {13},
series = {NIPS '21}
}

@inproceedings{okamoto2024learningdeepdissipativedynamics,
author = {Okamoto, Yuji and Kojima, Ryosuke},
title = {Learning deep dissipative dynamics},
year = {2025},
isbn = {978-1-57735-897-8},
publisher = {AAAI Press},
url = {https://doi.org/10.1609/aaai.v39i18.34175},
doi = {10.1609/aaai.v39i18.34175},
booktitle = {Proceedings of the Thirty-Ninth AAAI Conference on Artificial Intelligence and Thirty-Seventh Conference on Innovative Applications of Artificial Intelligence and Fifteenth Symposium on Educational Advances in Artificial Intelligence},
articleno = {2202},
numpages = {9},
series = {AAAI'25/IAAI'25/EAAI'25}
}

@INPROCEEDINGS{sochopoulos2024learningdeepdynamicalsystems,
  author={Sochopoulos, Andreas and Gienger, Michael and Vijayakumar, Sethu},
  booktitle={2024 IEEE/RSJ International Conference on Intelligent Robots and Systems (IROS)}, 
  title={Learning Deep Dynamical Systems using Stable Neural ODEs}, 
  year={2024},
  volume={},
  number={},
  pages={11163-11170},
  doi={10.1109/IROS58592.2024.10801826}}

@inproceedings{yan2020robustness,
    title={On Robustness of Neural Ordinary Differential Equations},
    author={Hanshu Yan and Jiawei Du and Vincent Y. F. Tan and Jiashi Feng},
    booktitle={International Conference on Learning Representations (ICLR)},
    year={2020},
    url={https://openreview.net/forum?id=B1e9Y2NYvS}
}

@inproceedings{bai2019deepequilibriummodels,
	author = {Bai, Shaojie and Kolter, J. Zico and Koltun, Vladlen},
	booktitle = {Advances in Neural Information Processing Systems},
	editor = {H. Wallach and H. Larochelle and A. Beygelzimer and F. d\textquotesingle Alch\'{e}-Buc and E. Fox and R. Garnett},
	publisher = {Curran Associates, Inc.},
	title = {Deep Equilibrium Models},
	url = {https://proceedings.neurips.cc/paper_files/paper/2019/file/01386bd6d8e091c2ab4c7c7de644d37b-Paper.pdf},
	volume = {32},
	year = {2019}}

@misc{DOE_PumpedStorageHydropower,
  author       = {U.S. Department of Energy},
  title        = {Pumped Storage Hydropower},
  howpublished = {\url{https://www.energy.gov/eere/water/pumped-storage-hydropower}},
  year         = {2025},
  note         = {Accessed: 2025-11-18},
  institution  = {U.S. Department of Energy, Office of Energy Efficiency & Renewable Energy}
}

@INPROCEEDINGS{koch2024atmospheric,

  author={Koch, James and Forland, Brenda and Bernacki, Bruce and Doster, Timothy and Emerson, Tegan},

  booktitle={IGARSS 2024 - 2024 IEEE International Geoscience and Remote Sensing Symposium}, 

  title={Data-Driven Invertible Neural Surrogates of Atmospheric Transmission}, 

  year={2024},

  volume={},

  number={},

  pages={6943-6947},

  doi={10.1109/IGARSS53475.2024.10642124}}

@misc{koch2024gnde,
  title        = {Graph Neural Differential Equations for Coarse-Grained Socioeconomic Dynamics},
  author       = {Koch, James and Roy Chowdhury, Pranab K. and Wan, Heng and Bhaduri, Parin and Yoon, Jim and Srikrishnan, Vivek and Daniel, W. Brent},
  year         = {2024},
  eprint       = {2407.18108},
  archivePrefix= {arXiv},
  primaryClass = {cs.LG},
  url          = {https://arxiv.org/abs/2407.18108}
}

@inproceedings{Lienen_torchode_A_Parallel_2022,
author = {Lienen, Marten and Günnemann, Stephan},
booktitle = {The Symbiosis of Deep Learning and Differential Equations II, NeurIPS},
title = {{torchode: A Parallel ODE Solver for PyTorch}},
url = {https://openreview.net/forum?id=uiKVKTiUYB0},
year = {2022}
}

@article{ludwig1978qualitative,
  title={Qualitative analysis of insect outbreak systems: the spruce budworm and forest},
  author={Ludwig, Donald and Jones, DD and Holling, CS},
  journal={Journal of Animal Ecology},
  volume={47},
  number={1},
  pages={315--332},
  year={1978},
  publisher={JSTOR}
}

@book{strogatz2014nonlinear,
  title={Nonlinear Dynamics and Chaos: With Applications to Physics, Biology, Chemistry, and Engineering},
  author={Strogatz, Steven H.},
  year={2014},
  publisher={CRC Press},
  edition={2nd},
  isbn={978-0813349107}
}

@article{gardner_construction_2000,
	title = {Construction of a Genetic Toggle Switch in Escherichia coli},
	volume = {403},
	doi = {10.1038/35002131},
	pages = {339--42},
	journal = {Nature},
	author = {Gardner, Timothy and Cantor, Charles and Collins, James},
	year = {2000-02-01},
}

@article{ANGELI2004185,
	author = {David Angeli and Eduardo D. Sontag},
	doi = {https://doi.org/10.1016/j.sysconle.2003.08.003},
	issn = {0167-6911},
	journal = {Systems \& Control Letters},
	number = {3},
	pages = {185-202},
	title = {Multi-stability in monotone input/output systems},
	url = {https://www.sciencedirect.com/science/article/pii/S0167691103002317},
	volume = {51},
	year = {2004}}

@article{li_multistability_2013,
	title = {Multistability and Its Robustness of a Class of Biological Systems},
	volume = {12},
	issn = {1558-2639},
	url = {https://ieeexplore.ieee.org/document/6578197},
	doi = {10.1109/TNB.2013.2271220},
	pages = {321--331},
	number = {4},
	journal = {{IEEE} Transactions on {NanoBioscience}},
	author = {Li, Yuanlong and Lin, Zongli},
	year = {2013-12},
}

@misc{hsu2026jointoptimizationapproachidentifying,
      title={A joint optimization approach to identifying sparse dynamics using least squares kernel collocation}, 
      author={Alexander W. Hsu and Ike Griss Salas and Jacob M. Stevens-Haas and J. Nathan Kutz and Aleksandr Aravkin and Bamdad Hosseini},
      year={2026},
      eprint={2511.18555},
      archivePrefix={arXiv},
      primaryClass={stat.ME},
      url={https://arxiv.org/abs/2511.18555}, 
}

@book{Goldstein2002,
  title     = {Classical Mechanics},
  author    = {Goldstein, Herbert and Poole, Charles P. and Safko, John L.},
  year      = {2002},
  edition   = {3rd},
  publisher = {Addison-Wesley},
  address   = {San Francisco, CA},
  isbn      = {978-0201657029}
}

@article{Tsamardinos2022,
  author = {Tsamardinos, Ioannis and Greasidou, Elissavet and Borboudakis, Georgios},
  title = {Don’t lose samples to estimation: specific-to-general transfer learning for small-sample classification},
  journal = {Patterns},
  volume = {3},
  number = {10},
  pages = {100612},
  year = {2022},
  doi = {10.1016/j.patter.2022.100612}
}

@article{Bradshaw2023,
  author = {Bradshaw, Tyler J. and Huemann, Zachary and Hu, Junjie and Rahmim, Arman},
  title = {A Guide to Cross-Validation for Artificial Intelligence in Medical Imaging},
  journal = {Radiology: Artificial Intelligence},
  volume = {5},
  number = {4},
  year = {2023},
  doi = {10.1148/ryai.220232}
}

@inproceedings{Kingma2014AdamMA,
    title={Adam: A Method for Stochastic Optimization},
    author={Diederik P. Kingma and Jimmy Ba},
    booktitle={3rd International Conference on Learning Representations (ICLR)},
    year={2015},
    url={http://arxiv.org/abs/1412.6980}
}

@misc{pnnl_ftnode,
  author       = {Griss Salas, Ike and King, Ethan},
  title        = {{FTNODE}: Modeling and Control of Asymptotically Stable Systems Using a Foward Tracking Neural ODE Approach.},
  year         = {2026},
  publisher    = {GitHub},
  howpublished = {\url{https://github.com/pnnl/FTNODE}},
}

\appendix
\section{Code and data availability}
The source code used to generate the results are open-source and available on GitHub at \url{https://github.com/pnnl/FTNODE}

\end{document}